\newtheorem{theorem}{Theorem}
\newtheorem{lemma}[theorem]{Lemma}
\newtheorem{corollary}[theorem]{Corollary}
\theoremstyle{definition}
\newtheorem{definition}[theorem]{Definition}
\theoremstyle{remark}
\newcommand{\comments}[1]{}
\title{Unbalanced allocations} 
\author{Amanda Redlich\thanks{Department of Mathematics, Bowdoin College, 8600 College Station, Brunswick ME 04011 ({\tt aredlich@bowdoin.edu}).  This material is based in part upon work supported by the National Science Foundation under Award No. 1004382.}}
\begin{document}
\maketitle

\begin{abstract}
We consider the \emph{unbalanced} allocation of $m$ balls into $n$ bins by a randomized algorithm using the ``power of two choices".  For each ball, we select a set of bins at random, then place the ball in the fullest bin within the set.  Applications of this generic algorithm range from cost minimization to condensed matter physics.  In this paper, we analyze the distribution of the bin loads produced by this algorithm, considering, for example, largest and smallest loads, loads of subsets of the bins, and the likelihood of bins having equal loads.
\end{abstract}

\section{Introduction}

Balanced allocations are a well-studied area in computer science.  A simple example is shoppers selecting cashiers at a grocery store; in a balanced allocation, most lines would be of the same (short) length.  More technical applications include allocating servers in a network, allocating disks for storage, and hashing (see e.g. \cite{azar}).  In all of these settings, the goal is to balance loads across all of the possible sites.

There are many situations in which the opposite is the case.  A simple example is selecting films at a multiplex;  most filmgoers are wary of short lines and choose the films with longer lines, assuming the popular ones are better.  This forms an \emph{unbalanced} allocation, with many empty loads and a few very large ones.  

There are many settings in which an unbalanced allocation appears.  For example, if the options are priced with a buy-at-bulk discount, the cost of an unbalanced allocation is much lower than a balanced one.  An unbalanced allocation also arises in natural processes; for example, condensed matter physics, smog and cloud formation, and galactic clustering in astrophysics (see e.g. \cite{pego}).

This paper presents a generic unbalanced allocation algorithm and analyzes its behavior.  This fundamental analysis provides a framework for developing algorithms that minimize cost or model natural behavior in specific settings.  Techniques used include differential equations, random walks, coupling, and witness trees.

The paper is structured as follows. The second section, following the introduction, defines an unbalanced allocation algorithm and gives some background on the subject of balanced and unbalanced allocations. Individual loads are analyzed next in the third section. In the fourth section, loads of subsets are defined and bounded. Motivated by these bounds, the fifth section discusses relationships between loads. A summary of present knowledge and an outline of future work comprise the last section.

\section{Definitions and background}

We first fix some convenient notation.  Throughout this paper, $m$ balls will be distributed into $n$ bins $B_1, \ldots, B_n$ over the course of $m$ time steps $t=1, 2, \ldots, m$.  We denote the load in $B_k$ at time $t$ as $b_k(t)$.  

The simplest allocation is to distribute the balls uniformly at random.
\begin{definition}
$\mathrm{UNIFORM}(m,n)$ places $m$ balls into $n$ bins by, at each time $t$, selecting $i_t$ uniformly at random from $[n]$ and placing the ball into $B_{i_t}$.
\end{definition}

The expected load of each bin under UNIFORM is $m/n$.  However, the expected maximum load is much larger; for example, under UNIFORM$(n,n)$, the expected maximum load is $\Theta(\log n / \log \log n)$.  The inspiration for this paper was Azar, Broder, Karlin and Upfal's balanced allocation algorithm, introduced in \cite{azar}.  Their insight was to use the ``power of two choices" to modify UNIFORM.  They randomly select several options at each time step, then place the ball in the least-loaded option.

\begin{definition}
$\mathrm{FAIR}(m,n,d)$ places $m$ balls into $n$ bins as follows.  At each time step, select $d$ choices from $[n]$ (uniformly randomly with replacement) to form a multiset $S_t$. Place the $t^{th}$ ball into $B_{\mu}$, where $b_{\mu}(t)=\min_{i \in S_t}\{b_i(t)\}$.  In the case of a tie, choose $B_{\mu}$ among the minimal bins uniformly at random.
\end{definition}

This significantly decreases the size of the largest load; for example, under \\FAIR$(n,n,2)$, the expected maximum load is $O(\log \log n)$.  For a full discussion of how well FAIR balances the loads, see \cite{azar} or \cite{mitz}.  Many variations on the original FAIR$(n,n,2)$ have been studied.  For example, $m$ much greater than $n$ (see e.g. \cite{heavy}), asymmetrical tie-breaking (see e.g. \cite{asymmetry}), or a non-uniform distribution on option sets (see e.g. \cite{graphs} and \cite{hypergraphs}).

These variations all generate balanced allocations.  Here, we study \emph{unbalanced} allocations using the power of two choices.  That is, we randomly select several options at each time step, then place the ball in the \emph{most}-loaded option.

\begin{definition}
$\mathrm{GREEDY}(m,n,d)$ places $m$ balls into $n$ bins as follows.  At each time step, select from $[n]$ uniformly randomly with replacement $d$ times to form a multiset $S_t$.  Place the $t^{th}$ ball into $B_{M}$, where $b_{M}(t)=\max_{i \in S_{t}} \{b_{i}(t) \}$.  In the case of a tie, choose $B_M$ among the maximal bins uniformly at random.
\end{definition}

In a dynamic queuing-theory setting, a similar algorithm to GREEDY is briefly discussed in \cite{russians}.  In the context of graph theory, it is a close relative to both preferential attachment models (see e.g. \cite{pref}) and Achlioptas processes (see e.g. \cite{giant}).  This algorithm also resembles the rich-get-richer process on Polya urns in \cite{rich}.  The key difference between GREEDY and the process in \cite{rich} is that probability of gaining a ball is related to the bin's load.  In GREEDY, the relationship between two bins' loads (rather than their actual values) is what determines probability.  

\section{Loads of individual bins}

The first research on FAIR studied the number of bins of small load sizes and the largest expected load.  We begin our analysis of GREEDY in the same way. In this section, we give explicit formulas for the expected number of bins of fixed load sizes, together with concentration bounds.  This improves on a technique of Mitzenmacher \cite{mitz} by using a theorem of Wormald \cite{worm}.  The theorem holds for $m=O(n^{4/3}$ and load sizes up to $k=O(n^{1/3}$.  We then turn to the load of the largest bin under GREEDY.  Surprisingly, this is no larger than that under UNIFORM.

\begin{theorem}\label{diffeq}
For the case $m=O(n^{4/3})$, $k=O(n^{1/3})$ and $d$ constant, the expected number of bins with load $k$ under GREEDY$(m,n,d)$ is $z_{k}(c)$ where $z_{k}$ satisfies the system of differential equations $$\{ z_{i}'(t)=2(z_{i-1}(t)+\cdots+z_{0}(t))^{d}-(z_{i-2}(t)+\cdots+z_{0}(t))^{d}-(z_{i}(t)+\cdots+z_{0}(t))^{d} \}_{i=2}^{k},$$ $$z_{1}'(t)=2z_{0}(t)^{d}-(z_{0}(t)+z_{1}(t))^{d},$$ $$ z_{0}'(t)=-z_{0}(t)^{d}$$ with initial values $z_{0}(0)=1$ and $z_{i}(0)=0$ for all $i>0$.  Furthermore, the number of bins with load $k$ is within $O(\lambda n)$ of its expectation with probability $1-O(\frac{k}{\lambda}e^{-n\lambda^3})$.
\end{theorem}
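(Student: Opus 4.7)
The plan is to track $N_k(t)$, the number of bins of load exactly $k$ at time $t$, via Wormald's differential equation method. The first step is to compute the conditional one-step expected change in $N_k$ from the current configuration. Let $S_j(t) = N_0(t) + \cdots + N_j(t)$ count bins of load at most $j$. Conditional on the state at time $t$, the probability that the multiset of $d$ uniform-with-replacement choices has maximum load exactly $j$ equals
\[
\bigl(S_j(t)/n\bigr)^d - \bigl(S_{j-1}(t)/n\bigr)^d,
\]
since this is the probability that every chosen bin has load at most $j$ minus the probability that every chosen bin has load at most $j-1$. When the maximum equals $j$, the tie-breaking rule converts exactly one load-$j$ bin into a load-$(j+1)$ bin (which bin is chosen does not affect counts). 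Hence $N_k$ gains one when the maximum equals $k-1$ and loses one when the maximum equals $k$. Writing $z_i = N_i(t)/n$, for $k \geq 2$ this yields
\[
\mathbb{E}\bigl[N_k(t{+}1) - N_k(t) \,\big|\, \mathcal{F}_t\bigr] = 2\Bigl(\sum_{i=0}^{k-1} z_i\Bigr)^d - \Bigl(\sum_{i=0}^{k-2} z_i\Bigr)^d - \Bigl(\sum_{i=0}^{k} z_i\Bigr)^d,
\]
with boundary cases $\mathbb{E}[\Delta N_1 \mid \mathcal{F}_t] = 2z_0^d - (z_0 + z_1)^d$ and $\mathbb{E}[\Delta N_0 \mid \mathcal{F}_t] = -z_0^d$, matching the ODE system in the statement exactly.

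Rescaling time so that each ball corresponds to an increment $\Delta t = 1/n$, the expected drift per unit rescaled time agrees with the right-hand side of the target ODE, and the initial conditions $z_0(0) = 1$, $z_i(0) = 0$ for $i \geq 1$ encode that every bin is empty at $t = 0$. To deduce that $N_k(cn)/n$ concentrates at $z_k(c)$, I invoke Wormald's theorem. The hypotheses to verify are: (i) uniform one-step boundedness $|\Delta N_k| \leq 1$, immediate from the dynamics; (ii) that the conditional drift equals the right-hand side $F_k(z_0, \ldots, z_k)$ of the ODE up to an error $o(1)$, immediate since the drift computation above is exact; and (iii) Lipschitz continuity of $F_k$ on $[0,1]^{k+1}$, which follows because $F_k$ is a polynomial of bounded degree in variables taking values in $[0,1]$.

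The quantitative concentration is where $m = O(n^{4/3})$ and $k = O(n^{1/3})$ enter, and it is the main obstacle. Wormald's theorem, applied to the $(k+1)$-coordinate process run for $cn$ steps, gives a deviation bound of the form $\Pr\bigl(|N_k(cn) - n z_k(c)| > \lambda n\bigr) = O\bigl((k/\lambda)\,\exp(-n \lambda^3)\bigr)$, where the prefactor $k/\lambda$ reflects a union bound over the $k+1$ tracked load classes combined with a sequence of checkpoint controls along the trajectory, and the cubic exponent $\lambda^3$ arises from optimizing Wormald's step-size parameter against the per-step quadratic variation and the Lipschitz constants of the $F_k$. The hard part is confirming that the Lipschitz constants and trajectory magnitudes remain bounded uniformly up to time $c = m/n = O(n^{1/3})$, and that the fluctuation feedback between neighboring coordinates through the recursive sums $S_j$ does not compound beyond what the stated probability can absorb; this is precisely what forces $k = O(n^{1/3})$ and $m = O(n^{4/3})$, and is the source of all the quantitative tracking in the proof.
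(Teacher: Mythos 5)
Your drift computation and the reduction to Wormald's theorem are exactly the paper's approach, and the formula $\mathbb{E}[\Delta N_k\mid\mathcal{F}_t] = 2(S_{k-1}/n)^d-(S_{k-2}/n)^d-(S_k/n)^d$ matches. However, the final paragraph misidentifies where the constraints $k=O(n^{1/3})$ and $m=O(n^{4/3})$ come from, and this is the one place where the argument is actually genuinely nontrivial. The Lipschitz constants are \emph{not} a problem: each $f_k$ is a degree-$d$ polynomial in variables constrained to a simplex, so its Lipschitz constant is a fixed constant depending only on $d$ (the paper uses $L=2d\binom{d}{d/2}$), uniformly in $k$ and $n$. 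Nor does ``fluctuation feedback'' compound in a way Wormald's theorem fails to control. The real bottleneck is Wormald's domain boundary condition: the theorem requires the scaled trajectory to stay at $\ell^{\infty}$-distance at least $C\lambda$ from $\partial D$, and since the $a+1$ coordinates $x_0,\dots,x_a$ satisfy $\sum_i x_i \leq 1$, the distance from an interior point to the boundary is at most $\bigl(1-\sum_i x_i\bigr)/(a+1)$. This forces $C\lambda a < 1$. Simultaneously, the probability bound $1-O\bigl(\tfrac{a}{\lambda}e^{-n\lambda^3}\bigr)$ is vacuous unless $\lambda \gg n^{-1/3}$. Combining $\lambda a \lesssim 1$ with $\lambda \gg n^{-1/3}$ gives $a \ll n^{1/3}$, i.e.\ $k = O(n^{1/3})$, and correspondingly $m = O(n^{4/3})$ since the number of steps is tied to the load range tracked. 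You should replace the hand-waving about Lipschitz constants and compounding fluctuations with this explicit boundary-distance argument; otherwise the proposal is on track.
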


\begin{proof}
The proof of this theorem is a straightforward application of Theorem 5.1 of \cite{worm}, which gives tight estimates for sequences of random processes.  Here we consider the sequences of bins chosen as random processes, and define a function $y_{\ell}$ which, given a particular history of bin choices, outputs the number of bins of size $\ell$.  In other words, this theorem gives an estimate of $y_k$.    In the language of \cite{worm}, $S^{(n)+}$ is all possible sequences (of all lengths) of bins chosen.  Note that the range of the index $k$, i.e. of possible bin sizes, is a function of $m$ the number of balls distributed. Theorem 5.1 of \cite{worm} gives a relation between $y_k: S^{(n)+} \to \mathbb{R}$ and $f_k: \mathbb{R}^{a+2} \to \mathbb{R}$, where $a$ is some upper bound on $\ell$.  Here our $f_k(t/n, z_0, z_1, \ldots, z_a)$ will be an ``expected change polynomial",  $$n^{-d}\left( (z_0+\ldots+z_{\ell-1})^d-(z_0+\ldots+z_{\ell-2})^d-(z_0+\ldots+z_{\ell})^d+(z_0+\ldots+z_{\ell-1})^d\right).$$  

This function comes from calculating the expected change in the number of bins of size $\ell$ in one time step:  The number of bins of size $\ell$ increases by one when a bin containing $\ell-1$ balls is chosen, which happens with probability \\$n^{-d}\left( (z_0+\ldots+z_{\ell-1})^d-(z_0+\ldots+z_{\ell-2})^d\right)$.  On the other hand, the number of bins of size $\ell$ decreases by one when a bin containing $\ell$ balls is chosen, which happens with probability $n^{-d}\left( (z_0+\ldots+z_{\ell})^d-(z_0+\ldots+z_{\ell-1})^d\right)$.

We observe that the conditions of the theorem are satisfied for $C_0=1.1$, \\$D=\{(t, x_0, \ldots, x_a) \in [0,n^{1/3}]\times [0,1]^{a+1} \vert x_0+\cdots+x_a\leq 1, 0\leq t \leq n^{1/3}\}$, $\beta=1$, $\gamma=0$, $\lambda_1=0$, $a \leq n^{1/3}$, and Lipschitz constant $L=2d\binom{d}{d/2}$.  Briefly, this means that the change in $y_k$ at each time step is bounded, given that the past history of ball distributions $y_k$ behaves in a predictable manner, and the functions $f_k$ generate soluble differential equations on the domain $D$.  
Therefore, we see that the system of differential equations described in the theorem has a solution, and that solution approximates the number of bins with loads of size $k$.  More formally, 
$$y_{k}(t)=nz_{k}(t/n)+O(\lambda n)$$ with probability $$1-O(an\gamma+\frac{a\beta}{\lambda}e^{-n\lambda^3 / \beta^3})=1-O(\frac{a}{\lambda}e^{-n\lambda^3})$$ for any $\lambda>\lambda_1+C_0n\gamma=0$ and any $0\leq t \sigma n$, where $\sigma$ is the supremum of those $x$ to which the system's solution can be extended before reaching within $\ell^{\infty}$-distance $C\lambda$ of the boundary of $D$.     Note that the distance between internal point $(x, x_0, \ldots, x_{a})$ and the boundary of $D$ is $\max \{ (n^{1/3}-x), (1-\sum_{i=0}^{a} x_{i})/(a+1)\}$.  In particular, then, we have that $0\leq \sum_{i=0}^{a} x_{i} <1-C\lambda a$, and so $C\lambda a<1$.  Combining this condition with the bounds on probability gives meaningful bounds when $a < n^{1/3}$.
\end{proof}

Using this theorem, we can compare statistics for $\mathrm{FAIR}$, $\mathrm{UNIFORM}$, and $\mathrm{GREEDY}$ explicitly.  For example, the expected number of empty bins after $n$ balls have been distributed under these different allocations are given in Table 3.1, below (statistics for $\mathrm{FAIR}$ from \cite{mitz}).
\begin{table}[htdp]\footnotesize
\caption{Expected number of empty bins}
\begin{center}
\begin{tabular}{|c|c|c|c|}
\hline
                &$\mathrm{GREEDY}$         &$\mathrm{UNIFORM}$       &$\mathrm{FAIR}$\\
\hline
$d=2$     &$n/2$                                    &$n/e$                                    &$0.2384n$\\
\hline
$d=3$     &$n/\sqrt{3}$                          &$n/e$                                   &$0.1770n$\\
\hline
$d$          &$nd^{-1/(d-1)}$                &$n/e$                                             &-\\
\hline
\end{tabular}
\end{center}
\label{default}
\end{table}

Observe several important facts.  First, as expected, we have increased the number of empty bins.  Even for $d=2$, $\mathrm{GREEDY}$ has more empty bins than $\mathrm{UNIFORM}$ (and of course $\mathrm{FAIR}$).  Furthermore, this number increases as $d$ increases: as $d$ goes to infinity, this proportion approaches one.  Also note that we have an exact formula for the expected number of empty bins.  The analogous differential equation for $\mathrm{FAIR}$ can only be estimated computationally.

We now turn to bounding the largest load.  We cannot apply Theorem \ref{diffeq} directly, as it is not clear \emph{a priori} that the largest load is $O(n^{1/3})$.  Instead, we analyze GREEDY by coupling it with UNIFORM.  We show that GREEDY does not increase the load of the largest bin by more than a linear factor.  This is in contrast with FAIR, in which the largest bin is decreased exponentially from that of UNIFORM.
\begin{theorem}\label{dumb}
With probability greater than or equal to $1-1/n$, the largest load under $\mathrm{GREEDY}(cn,n,d)$ ($c$, $d$ arbitrary constants) is less than $\frac{(2+\epsilon)\log n}{\log \log n - \log d - \log c}$ for any constant $\epsilon>0$.
\end{theorem}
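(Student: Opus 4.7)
The plan is to couple $\mathrm{GREEDY}(cn,n,d)$ with $\mathrm{UNIFORM}(cdn,n)$ so that every bin's load under GREEDY is pointwise dominated by its load under UNIFORM, then apply a routine Chernoff-plus-union-bound argument to bound the maximum load of $\mathrm{UNIFORM}(cdn,n)$.

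For the coupling, I would draw a single stream of $cdn$ independent uniform samples from $[n]$. Partition the stream into $cn$ consecutive blocks of length $d$: the $t^{\text{th}}$ block is the multiset $S_t$ used by GREEDY to place its $t^{\text{th}}$ ball (into the maximum-load bin of $S_t$), while UNIFORM places one ball per sample, in the order they appear in the stream. Fix a bin $B$. Under GREEDY, $B$ gains at most one ball in block $t$, and only when $B \in S_t$; under UNIFORM, $B$ gains exactly the multiplicity of $B$ in that block. Since $\mathbf{1}[B \in S_t] \le |\{i \in S_t : i = B\}|$, summing over $t$ yields $b_B^{\mathrm{GREEDY}}(cn) \le b_B^{\mathrm{UNIFORM}}(cdn)$ realization by realization, and hence the inequality passes to the maxima.

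It then suffices to bound the maximum load in $\mathrm{UNIFORM}(cdn,n)$. Each bin's load is $\mathrm{Binomial}(cdn,1/n)$, so
$$\Pr[b_B \ge k] \le \binom{cdn}{k} n^{-k} \le \left(\frac{ecd}{k}\right)^{k},$$
and a union bound gives $\Pr[\max_B b_B \ge k] \le n(ecd/k)^{k}$. Setting
$$k = \left\lceil \frac{(2+\epsilon)\log n}{\log \log n - \log c - \log d} \right\rceil,$$
a brief expansion gives $\log k = \log \log n - \log \log \log n + O(1)$, whence $k \log(k/(ecd)) \ge (2+\epsilon - o(1))\log n \ge 2\log n$ for $n$ sufficiently large, and so the union-bounded failure probability is at most $1/n$.

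The main obstacle is really only the calibration in the last step: one must verify that the additive $-1-\log c - \log d$ appearing inside $\log(k/(ecd))$, together with the $-\log \log \log n$ from expanding $\log k$, can all be absorbed into the slack $\epsilon > 0$ for large $n$. The coupling itself is transparent once one observes that each GREEDY step effectively discards $d-1$ of the $d$ samples it drew, so UNIFORM always deposits at least as much mass on every bin as GREEDY does.
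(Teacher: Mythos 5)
Your proposal is correct and takes essentially the same route as the paper: the identical coupling with $\mathrm{UNIFORM}(cdn,n)$ by placing a ball in every bin of each option set (so that $\mathrm{UNIFORM}$ pointwise dominates $\mathrm{GREEDY}$), followed by a Chernoff-type tail bound on a single bin's load and a union bound over the $n$ bins. The paper routes the tail estimate through Theorem A.1.12 of Alon--Spencer applied to a centered sum rather than your direct bound $\binom{cdn}{k}n^{-k}\le(ecd/k)^{k}$, but this is only a cosmetic difference, and your calibration of $k$ against the claimed threshold is handled equivalently.
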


\begin{proof}
The key idea is to couple $\mathrm{UNIFORM}(dcn, n)$ with $\mathrm{GREEDY}(cn,n,d)$ by, under $\mathrm{UNIFORM}$, placing a ball in each of the bins in the option set under $\mathrm{GREEDY}$.  Thus the largest load under $\mathrm{UNIFORM}(dcn, n)$ is an upper bound on the largest load under $\mathrm{GREEDY}(cn,n,d)$.  

More formally, the coupling is that $(d(t-1)+i)^{th}$ ball is placed by $\mathrm{UNIFORM}(dcn,n)$ in the $i^{th}$ bin chosen for $S_{t}$ by $\mathrm{GREEDY}(m,n,d)$.  (If a bin is chosen twice  for $S_{t}$, then it receives two balls.)  This is clearly a valid coupling.  Furthermore, if bin $B_{i}$ receives a ball under $\mathrm{GREEDY}(cn,n)$ at time $t$, then $i \in S_{t}$, so bin $B_{i}$ also receives a ball under $\mathrm{UNIFORM}(dcn,n)$ at some time between $d(t-1)$ and $dt$.  Therefore once all $cn$ balls have been distributed under $\mathrm{GREEDY}(cn,n)$ and all $dcn$ balls have  been distributed under $\mathrm{UNIFORM}(dcn,n)$, every bin under $\mathrm{UNIFORM}$ is at least as full as its counterpart under $\mathrm{GREEDY}$.  So to bound the fullest bin in $\mathrm{GREEDY}(cn,n,d)$, it's enough to bound the fullest bin in $\mathrm{UNIFORM}(dcn,n)$.

All that remains is to bound $\mathrm{UNIFORM}$.  Consider $X'=\sum_{i=1}^{m} X_{i}'$, where the $X_{i}'$ are independent random
variables, each equal to 1 with probability $1-(1-1/n)^{d}$ and 0 with
probability $(1-1/n)^{d}$ (so $X'$ has the same distribution as a
count of the times some fixed bin $B$ was one of the $d$ options). Let
$X=\sum_{i=1}^{m} X_{i}$ be the sum of random variables, each independently equal
to 1 with probability $d/n$.  Note that $Pr(X \geq k) \geq Pr(X'
\geq k)$ for any $m$.  Furthermore, let $Y_{i}=Y_{i}-d/n$. Note that
$E(Y_{i})=0$ and $Y=\sum_{i=1}^{m}Y_{i}=X-m(d/n)=X-cd$.

We now use a standard bound on $Y$ (Theorem A.1.12 of \cite{joel}), which will lead to a bound on $X$, which in turn gives a bound on loads under $\mathrm{GREEDY}$.

\begin{lemma}\label{beta}
For $Y$ as defined above, and for arbitrary $\beta$, $$Pr( \left| Y \right| \geq (\beta-1)cd) < (e^{\beta-1}\beta^{-\beta})^{cd}.$$
\end{lemma}

Using this lemma, we will upper bound the
probability a particular bin is very large by $1/n^{2}$ (thus bounding the probability of any bin being very large by $1/n$).  Setting $\beta=\frac{(2 + \epsilon) \log n}{cd \log \log n}$ for arbitrary $\epsilon > \frac{2 \log
\log \log n}{\log \log n - \log \log \log 3}$ (for example, any constant
$\epsilon$ will do), it is a straightforward calculation to see that $Pr( \left| Y \right| \geq (\beta-1)cd)< 1/n^{2}$ for $n$ sufficiently large. On the
other hand,
$$Pr(\left| Y \right| \geq k) \geq Pr (X \geq k+cd) \geq Pr ( X' \geq
k+cd).$$  So the probability of a bin existing which has more than $\beta
cd=\frac{(2+\epsilon)\log n}{\log \log n}$ balls is thus $< 1/n$ for $n$
sufficiently large, and we have an upper bound on the largest load.
\qquad\end{proof}

The preceding proof assumed that the number of balls is linear in the number of bins, and also that the number of options is constant.  However, the same argument works with looser constraints.

\begin{theorem}\label{arbd}
If $$c(n)d(n)=o(\log n),$$ then with probability at least $1-1/n$ the most full bin under $\mathrm{GREEDY}(c(n)n,n, d(n))$ has load less than $$\frac{(2+\epsilon) \log n}{\log (2+\epsilon) + \log \log n - \log c(n) -\log d(n)}$$ for all constant $\epsilon>0$.
\end{theorem}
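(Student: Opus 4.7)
The plan is to mimic the proof of Theorem \ref{dumb}, replacing the constants $c,d$ by the functions $c(n),d(n)$, and to verify that the Chernoff-style bound from Lemma \ref{beta} still delivers $\le 1/n^2$ per bin under the weaker hypothesis $c(n)d(n)=o(\log n)$.

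First, I would set up the same coupling of $\mathrm{GREEDY}(c(n)n,n,d(n))$ with $\mathrm{UNIFORM}(c(n)d(n)n,n)$: at each time step $t$ of GREEDY, let UNIFORM place one ball in each of the $d(n)$ bins of $S_t$ in order. As in Theorem \ref{dumb}, every bin under UNIFORM is then at least as full as its counterpart under GREEDY, and in particular the load of any fixed bin $B$ under GREEDY is bounded by the number $X'$ of time steps at which $B$ is chosen as one of the options in $S_t$. The random variable $X'$ is a sum of $c(n)n$ independent $\mathrm{Bernoulli}(1-(1-1/n)^{d(n)})$ variables and is stochastically dominated by $X=\sum_{i=1}^{c(n)n} X_i$, where each $X_i\sim\mathrm{Bernoulli}(d(n)/n)$ independently, with $EX=c(n)d(n)$.

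Next, I would set $Y=X-c(n)d(n)$ and apply Lemma \ref{beta} with
$$\beta=\frac{(2+\epsilon)\log n}{c(n)d(n)\bigl(\log(2+\epsilon)+\log\log n-\log c(n)-\log d(n)\bigr)},$$
so that $\beta c(n)d(n)$ is precisely the load bound claimed in the theorem. Writing $\Delta=\log(2+\epsilon)+\log\log n-\log c(n)-\log d(n)$ for the bracket, a short computation gives $\log\beta=\Delta-\log\Delta$, and hence
$$c(n)d(n)\bigl(\beta\log\beta-\beta+1\bigr)\;\ge\;(2+\epsilon)\log n$$
for $n$ sufficiently large. This makes $(e^{\beta-1}\beta^{-\beta})^{c(n)d(n)}\le 1/n^2$, and a union bound over the $n$ bins yields the claimed $1-1/n$ probability bound on the maximum load.

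The main obstacle is the asymptotic bookkeeping. The hypothesis $c(n)d(n)=o(\log n)$ is used in exactly three places: to guarantee that $\Delta\to\infty$ (so $\beta$ is well-defined, positive, and eventually $>1$), to obtain $\log\beta\sim\Delta$ via $\log\Delta=o(\Delta)$, and to absorb the lower-order terms $c(n)d(n)\beta=O(\log n/\Delta)=o(\log n)$ and $c(n)d(n)=o(\log n)$ in the Chernoff exponent so that $c(n)d(n)\beta\log\beta\sim(2+\epsilon)\log n$ genuinely dominates. The additional $\log(2+\epsilon)$ appearing in the denominator of the theorem (absent from Theorem \ref{dumb}) is precisely what is needed to absorb the gap $\log\Delta$ between $\log\beta$ and $\Delta$ while still producing the factor $(2+\epsilon)$ and not something smaller.
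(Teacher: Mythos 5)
Your proof follows the paper's route exactly: couple $\mathrm{GREEDY}(c(n)n,n,d(n))$ with $\mathrm{UNIFORM}(c(n)d(n)n,n)$, dominate the single-bin count $X'$ by $X\sim\mathrm{Bin}(c(n)n,d(n)/n)$, and apply Lemma~\ref{beta}. Your $\beta$ is exactly the paper's: with $a=(2+\epsilon)\log n/(c(n)d(n))$ one has $\Delta=\log a$ and so your $\beta=a/\Delta=a/\log a$, and the identity $\log\beta=\Delta-\log\Delta$ is correct. The asymptotic bookkeeping in your last paragraph is also the right analysis, and you correctly identify all the places the hypothesis $c(n)d(n)=o(\log n)$ enters.

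One intermediate inequality, however, is overclaimed. You assert $c(n)d(n)\bigl(\beta\log\beta-\beta+1\bigr)\ge(2+\epsilon)\log n$. Expanding,
$$c(n)d(n)\bigl(\beta\log\beta-\beta+1\bigr)=(2+\epsilon)\log n\left(1-\frac{\log\Delta+1}{\Delta}\right)+c(n)d(n),$$
and since $(2+\epsilon)\log n\cdot\frac{\log\Delta+1}{\Delta}$ can far exceed $c(n)d(n)$ (e.g.\ already for constant $c,d$, where $\Delta\sim\log\log n$ and the deficit is $\Theta(\log n\cdot\log\log\log n/\log\log n)$), the quantity actually sits strictly below $(2+\epsilon)\log n$. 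The claim you actually need is only $\ge 2\log n$, to make $(e^{\beta-1}\beta^{-\beta})^{c(n)d(n)}\le n^{-2}$; and that \emph{does} follow for $n$ large, because $(\log\Delta+1)/\Delta\to0$ while the prefactor is $2+\epsilon$ rather than $2$ — this is precisely the role of the positive constant $\epsilon$. So the proof works once you replace the $(2+\epsilon)\log n$ on the right of that inequality by $2\log n$.
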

\begin{proof}
Let $a=\frac{(2+\epsilon)\log n}{c(n)d(n)}$ and $\beta = \frac{a}{\log a}$. When $c(n)$ and $d(n)$ are as stipulated in the theorem, for $n$ sufficiently large, $\beta$ is greater than 1.  So we may apply the same lemma to bound the probability that $X \geq \beta c(n)d(n)$ by
$(e^{\beta-1}\beta^{-\beta})^{c(n)d(n)}$.  As before,
the fullest bin almost surely contains fewer than
$$\beta c(n)d(n)=\frac{(2+\epsilon) \log n}{\log (2+\epsilon) + \log \log n -
\log c(n)-\log d(n)}$$ balls, as claimed.
\qquad\end{proof}

\section{Subsets of bins}
We now know that GREEDY creates a distribution with few small bins but no very large bins.  This presents us with a conundrum: where do the balls go?

The main goal of this section is to answer this question in a general setting.  Rather than studying how many balls are in individual bins, we study how many balls are in subsets of the bins.  This allows for bounds that hold even in the case of $md=\Omega(n\log n)$ (unlike Theorem \ref{arbd}) or $m=\Omega(n^{4/3})$ (unlike Theorem \ref{diffeq}). Our first result is a upper bound on the number of balls in the smallest $x$ subset of the bins, which holds for arbitrary $m,n,d,x$. 

\begin{theorem}\label{alpha1}
Under $\mathrm{GREEDY}(m,n,d)$, the expected number of balls in the smallest $xn$ bins is less than or equal to $x^dm$ for all values of $x$, $m$, $n$, and $d$.  The probability of the last $xn$ bins containing at least $k$ balls is upper bounded by \\$1-\sum_{i=0}^{k-1} \binom{m}{i}(1-x)^{m-i}x^i$.
\end{theorem}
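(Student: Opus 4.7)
My plan is to introduce the potential function
$$\Phi(t) = \sum_{r=1}^{xn} L_{(r)}(t),$$
where $L_{(1)}(t) \le \cdots \le L_{(n)}(t)$ are the sorted bin loads at the start of step $t$. Since $\Phi(1) = 0$ and $\Phi(m+1)$ is exactly the total number of balls in the smallest $xn$ bins at the end, both parts of the theorem reduce to controlling the stepwise increments $\Phi(t+1) - \Phi(t)$.

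First I would verify by a short case analysis on where $b_{i_t}(t)$ sits relative to the $(xn{+}1)$-th order statistic $L_{(xn+1)}(t)$ that the increment is always either $0$ or $1$, and equals $1$ exactly when $b_{i_t}(t) < L_{(xn+1)}(t)$. Intuitively, if the bin receiving the ball was strictly among the $xn$ smallest, its new load remains at most $L_{(xn+1)}(t)$ and it stays in the smallest $xn$; if instead $b_{i_t}(t) \ge L_{(xn+1)}(t)$, then either the ball lands outside the smallest $xn$, or the bumped bin is immediately replaced in the set of smallest $xn$ by a tied bin of the same value, so $\Phi$ is unchanged. Defining $\Phi$ as a function of the multiset of loads (rather than a chosen collection of labeled bins) makes the argument insensitive to any tie-breaking rule.

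The key step is then the bound on the conditional probability of a $+1$ step. Because GREEDY places the ball in $B_{i_t}$ with $b_{i_t}(t) = \max_{j \in S_t} b_j(t)$, the event $b_{i_t}(t) < L_{(xn+1)}(t)$ is equivalent to every one of the $d$ independent uniform choices in $S_t$ landing in the set of bins whose current load is strictly less than $L_{(xn+1)}(t)$. By definition of the $(xn{+}1)$-th order statistic, that set has size at most $xn$, so the conditional probability is at most $(xn/n)^d = x^d$. This is precisely where the power-of-$d$ hypothesis enters.

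With these two observations in hand, part (a) follows from linearity of expectation: $E[\Phi(m+1)] = \sum_{t=1}^{m} E[\Phi(t+1) - \Phi(t)] \le m\, x^d$. For part (b), each increment is conditionally Bernoulli with parameter at most $x^d \le x$ given the entire history, so the standard inverse-CDF coupling to independent Bernoulli$(x)$ trials shows that $\Phi(m+1)$ is stochastically dominated by $\mathrm{Bin}(m,x)$, yielding $\Pr[\Phi(m+1) \ge k] \le 1 - \sum_{i=0}^{k-1} \binom{m}{i}(1-x)^{m-i}x^i$. I do not foresee a serious obstacle; the delicate point is the case analysis for the increments, which goes through cleanly once $\Phi$ is viewed as a symmetric function of the load multiset.
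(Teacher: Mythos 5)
Your proof is correct and takes essentially the same approach as the paper: both track the total load of the smallest $xn$ bins as a $\{0,1\}$-increment process whose conditional $+1$ probability is at most $x^d$ (all $d$ choices must land among the smallest $xn$ bins), then conclude by linearity of expectation and a binomial tail bound. Your potential-function framing in terms of sorted order statistics is a tidy way to sidestep tie-breaking, and you pin down the exact increment condition $b_{i_t}(t)<L_{(xn+1)}(t)$ where the paper works with a slightly weaker necessary condition after relabeling bins, but the core argument is the same.
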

\begin{proof}
The key idea of this proof is to label the bins according to their loads, rather than their original indices.  We  reorder bins from largest to smallest after each time step.  At time $t$, let $i(t)$ be the original index of the $i^{th}$ largest bin.  Break ties in this labeling randomly; for example, if $b_{1}(t)=3$, $b_{2}(t)=2$, and $b_{3}(t)=3$, then $1(t)$ is equally likely to be 1 or 3.  

The implementation of $\mathrm{GREEDY}$ under this labeling is the same as under the original labeling.  An option set $(i_{1}, \ldots, i_{d})$ is chosen uniformly randomly from $[n]^{d}$ at time $t$.  The bin $B_{i_{\mu}(t)}$ that gets the ball is such that $i_{\mu}=\min_{i_{j} \in S_{t} } \{ i_{j}\}$ and thus $b_{i_{\mu}(t)}(t)=\max_{i_{j} \in S_{t}} \{ b_{i_{j}(t)}(t) \}$.  

The reason for this reordering is that we now know that $i_{\mu}=\min \{i_{1}, \ldots, i_{d} \}$.   
In the original labeling, it was equally likely that $b_{1}(t)>b_{2}(t)$ or $b_{2}(t)>b_{1}(t)$.  Now, it is always true that $b_{1(t)}(t) \geq b_{2(t)}(t)$.  

On the other hand, it is no longer the case that giving the $t^{th}$ ball to bin $i$ implies bin $i$ is larger at time $t+1$.  For example, suppose the loads at time $t$ are $$(b_{1(t)}(t), b_{2(t)}(t),\ldots b_{6(t)}(t)) =(2,2,2,1,1,0)$$ and $S_{t}=\{3,4,6\}$.  Then the ball goes into bin $B_{3(t)}$.  At time $t+1$, the configuration is $$(b_{1(t+1)}(t+1), b_{2(t+1)}(t+1), \ldots b_{6(t+1)}(t+1))=( 3,2,2,1,1,0).$$  Although bin $B_{3(t)}$ was given the ball, $b_{3(t)}(t)=b_{3(t+1)}(t+1)$.  The increase is for $b_{1(t)}$:  $b_{1(t+1)}(t+1)=b_{1(t)}(t)+1$.

However, putting the ball into bin $B_{i_{\mu}}$ does guarantee that the increase is in a bin of index at most $i_{\mu}$.  The increased bin may move to the left after reordering, but never to the right.  That is, $$\sum_{j=1}^{i_{\mu}} b_{j(t)}(t) +1 =\sum_{j=1}^{i_{\mu}} b_{j(t+1)}(t+1).$$

We use this observation to get bounds on ball placement.  The load of the least $xn$ bins increases at time $t$ only if $i_{\mu}$, and therefore all of $S_{t}$, is contained within the least $xn$ bins.  In other words, $$\sum_{j>(1-x)n}b_{j(t)}$$ can increase only if $$i_{\mu} > (1-x)n.$$

The option set is within the least $xn$ bins at each time step with probability $x^{d}$.  Therefore the expected number of times this has happened, once all $m$ balls have been distributed, is $x^{d}m$.    Because the option set being within the least $xn$ is a necessary condition for the number of balls in the least $xn$ bins to increase, this tells us that the expected number of balls in the least $xn$ bins is at most $x^{d}m$.

Similarly, the least $xn$ bins contain at least $k$ balls only if the option set has been within the least $xn$ bins at least $k$ times.  Therefore the probability of the least $xn$ bins containing at least $k$ balls is at most $$1-\sum_{i=0}^{k-1}\binom{m}{i}Pr(S_{t}\text{ is in least }xn)^{i}Pr(S_{t}\text{ is not in least }xn)^{m-i}=1-\sum_{i=0}^{k-1}\binom{m}{i}x^{di}(1-x^{d})^{m-i}.$$
\end{proof}

This gives us another way to compare $\mathrm{GREEDY}$ and $\mathrm{UNIFORM}$.  As the expected proportion of bins of load $k$ under $\mathrm{UNIFORM}(m,n)$ is known to be $$\frac{(m/n)^{k}e^{-m/n}}{k!},$$ we can compute the expected fraction of smallest bins that contain a particular fraction of balls.  For example, when $m=n$, the expected number of bins of load 0 is $n/e$, of load 1 is $n/e$, and of load 2 is $n/2e$.  So if we take just the emptiest bins until we have half the balls, the expected number of bins would be $$n/e+n/e+(n/2-n/e)(1/2) \approx 0.8n.$$  We can use the same type of calculation to generate the following table.

\begin{table}[htdp]\footnotesize
\caption{Expected $x$ fraction of least bins containing $y$ fraction of balls, $m=n$}
\begin{center}
\begin{tabular}{|c|c|c|c|c|c|}
\hline
                &$\mathrm{UNIFORM}$         &$d=2$       &$d=3$	&$d=4$	&$d$\\
\hline
$y=1/3$     &$x=0.7$                                    &$x\geq 0.57$          &$x\geq 0.69$     &$x \geq 0.75$&$x\geq (1/3)^{1/d}$\\
\hline
$y=1/2$     &$x=0.8$                          &$x\geq 0.70$        &$x\geq 0.79$               &$ x \geq 0.84$            &$x\geq(1/2)^{1/d}$\\
\hline
$y=2/3$          &$x=0.88$                &$x\geq 0.81$		&$x\geq 0.87$                   &$x \geq 0.90$        &$x\geq(2/3)^{1/d}$\\
\hline
\end{tabular}
\end{center}
\label{default}
\end{table}

This answers our question from the beginning of this section:  $\mathrm{GREEDY}$ concentrates the balls in the largest few bins.  For the values in the table, $\mathrm{GREEDY}$ overtakes $\mathrm{UNIFORM}$ at $d=4$, but this effect becomes more pronounced as $d$ grows.  For example, when $d= 10$, at least half the balls are expected to be in the largest $(1-2^{-0.1})n\simeq 0.07 n$ bins. 

Our estimate for the number of balls in the last $x$ bins from the proof of Theorem \ref{alpha1} isn't necessarily tight.  It is possible that the option set is contained within the least $xn$ bins and yet their load does not increase.

For example, let $x=2/3$.  Recall the example in the proof of Theorem \ref{alpha1}: if the loads are $$(b_{1(t)}(t), b_{2(t)}(t),\ldots b_{6(t)}(t)) =(2,2,2,1,1,0)$$ and $S_{t}=\{ 3,4,6\}$, the ball goes into bin $B_{3(t)}$ and at time $t+1$, the configuration is $$(b_{1(t+1)}(t+1), b_{2(t+1)}(t+1), \ldots b_{6(t+1)}(t+1))=( 3,2,2,1,1,0).$$  Even though the option set was within the least $(2/3)n$ bins, the load of the least $(2/3)n$ bins didn't change.

We now find a lower bound on the number of balls in the least $xn$ bins which holds for arbitrary $m,n,d,x$.  The proof of this theorem uses a more in-depth analysis of possible configurations and is fairly involved.  For ease of exposition, we give a simplified proof here.

\begin{theorem}\label{alpha2}
For all values of $m$, $n$, and $d$, we have the following lower bounds on the expected number of balls under $\mathrm{GREEDY}(m,n,d)$ in the smallest $xn$ bins.  

When $xn>d$, the expected number is at least
$$ \left(\frac{1}{2ed}\right) x^{d+1}m.$$  

When $1< xn \leq d$, the expected number is at least $$e^{-d}x^{d}m.$$  

When $xn=1$, the expected number is at least $$x^{d+1}m.$$
\end{theorem}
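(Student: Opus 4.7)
Plan: I would extend the reordered-labeling framework from the proof of Theorem \ref{alpha1}, tracking $S(t) := \sum_{j>(1-x)n} b_{j(t)}(t)$ and producing a lower bound on $E[S(m)] = \sum_{t=0}^{m-1} P[\Delta_t = 1]$, where $\Delta_t$ is the indicator that $S$ strictly increases at step $t$. The first step is to sharpen the condition from Theorem \ref{alpha1}: $\Delta_t = 1$ if and only if the option set lies in positions $> (1-x)n$ AND $b_{i_\mu}(t) < b_{(1-x)n}(t)$. The second clause is the content of the example preceding the theorem: when $b_{i_\mu}(t) = b_{(1-x)n}(t)$, the ball after re-sorting migrates across the boundary at position $(1-x)n$ and exits the last $xn$-bin region, so $S$ does not change. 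The remainder of the proof identifies, in each of the three regimes, a sufficient event that forces both conditions with the required probability.

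In the regime $xn > d$, let $\sigma(t)$ denote the number of positions in the last $xn$-bin region whose load is strictly below the boundary $b_{(1-x)n}(t)$. Then $P[\Delta_t = 1 \mid \text{configuration at time } t] \geq (\sigma(t)/n)^d$, corresponding to the event that every option lies in a ``safe'' position, where $b_{i_\mu} < b_{(1-x)n}$ is automatic. Combined with a structural argument that the time-averaged $\sigma$ is at least a constant multiple of $xn$ (the boundary load cannot plateau indefinitely over the last $xn$ positions together with position $(1-x)n$ as balls accumulate), together with the min-of-$d$-uniforms distribution of $i_\mu$, this should yield the prefactor $\frac{1}{2ed}$. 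In the intermediate regime $1 < xn \le d$, the option set over the small region $\{(1-x)n+1,\dots,n\}$ is combinatorially simpler: I would condition on an event such as ``the option set is $d$ copies of a fixed safe position'' or ``the option set covers all $xn$ positions'' and extract the $e^{-d}$ factor from a Stirling estimate applied to the associated surjection/multiplicity count.

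The case $xn = 1$ is the main obstacle. Here $S(t)$ is simply the load of the smallest bin, and $\Delta_t = 1$ requires all $d$ options to be position $n$ (probability $1/n^d$) AND the smallest load to be uniquely held at time $t$. I would lower-bound the expected number of time steps at which the minimum is uniquely held by $m/n$, via a symmetry and amortization argument on the minimum-multiplicity process: a bin only leaves the minimum plateau by being selected, and the plateau must shrink to size one before any $\Delta_t = 1$ event can occur, so the total time spent at multiplicity one can be charged against the number of level-increases of the minimum. This produces the $1/n$ factor, and hence the bound $x \cdot x^d m = x^{d+1} m$. This unique-minimum estimate is genuinely probabilistic and is where the approach for the other cases does not suffice; in the other two regimes the combinatorial abundance of option sets supplies the needed slack once the sharpened characterization of $\Delta_t$ is in place.
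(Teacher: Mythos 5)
Your opening move is right: you sharpen the observation from the proof of Theorem~\ref{alpha1} to the exact characterization that $S(t)=\sum_{j>(1-x)n}b_{j(t)}(t)$ increases at step $t$ if and only if the option set lies in positions $>(1-x)n$ \emph{and} $b_{i_\mu}(t)<b_{(1-x)n}(t)$. (In fact, because the sorted load sequence is non-increasing, the ``safe'' positions are exactly $\{n-\sigma(t)+1,\dots,n\}$, so $P[\Delta_t=1\mid\mathcal F_t]=(\sigma(t)/n)^d$ exactly, not merely at least.) From there, though, your route is genuinely different from the paper's. The paper does not track a step-by-step safe-position count. It conditions on the random number $g$ of ``good'' steps where the option set falls entirely in the innermost $xn/3$ positions, and then gives a \emph{pathwise} lower bound on $S(m)$ in terms of $g$: either at least $g/2$ of those steps increment inside the outer $2xn/3$ (giving $g/2$ balls directly), or at least $g/2$ of them are ``case two,'' in which case a string-shortening argument on the $(1-2x/3)n-1$st bin forces balls into the tail; the three regimes in the theorem come from optimizing the partition of $xn$ into three pieces and the case ratio. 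Your approach replaces this combinatorial bookkeeping with a time-averaged estimate on $\sigma$; it would produce a sharper bound of order $c^d x^d m$ rather than $x^{d+1}m$ if the estimate held.

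The gap is that the estimate is asserted, not proved, and it is where all the work is, since $P[\Delta_t=1\mid\mathcal F_t]=(\sigma(t)/n)^d$ is exact. The claim ``time-averaged $\sigma\gtrsim c\,xn$ with $c$ a constant'' has no supporting argument beyond ``the boundary cannot plateau indefinitely,'' and it cannot hold uniformly over the parameter ranges in the theorem. Note that $\sigma(t)>0$ requires strictly fewer than $xn$ bins to be tied for the minimum load, which simply never happens when $x$ is smaller than the empty-bin fraction: for $m=n$ and any $x<d^{-1/(d-1)}$ one has $b_{(1-x)n}(t)=0=b_n(t)$ for all $t\le m$ with high probability, so $\sigma(t)\equiv 0$ and your accounting produces $0$. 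The same issue undermines your $xn=1$ case, where you claim the minimum is uniquely held for expected $\ge m/n$ steps; this also fails whenever the empty-bin plateau persists. So the missing structural lemma is not a detail that can be deferred to ``amortization''; it needs a proof, and any correct version must depend on $m$, $x$, and $d$ in a nontrivial way. The paper's argument avoids needing any such per-step lower bound by working conditionally on $g$ and charging balls via the string-shortening mechanism rather than via $\sigma$.
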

\begin{proof}
The contradictory behavior in our example above happened because the first, second, and third bins had the same load.  When the $(xn+j)^{th}$ bin is given a ball, the $(xn-k)^{th}$ bin increases if and only if the bins between the $(xn-k)^{th}$ and the $(xn+j)^{th}$ all had the same load, and the $(xn-k-1)^{st}$ bin was larger.  More formally, when $$i_{m}(t)=xn+j,$$ then $$b_{(xn+j)(t)}(t)=b_{(xn+j)(t+1)}(t+1)$$ and $$b_{(xn-k)(t)}(t)+1=b_{(xn-k)(t+1)}(t+1),$$ if and only if $$b_{xn-k-1}(t)>b_{(xn-k)(t)}(t)=b_{(xn-k+1)(t)}(t)=\ldots=b_{(xn+j)(t)}(t).$$   In order to find a lower bound on the load of the last $x$ bins, we will bound the probability of a string of equally-loaded bins.

We here give the main ideas of the proof, using simple parameters.  The bounds in Theorem \ref{alpha2} are found by optimizing these parameters. We bound how many balls are in the least $xn$ bins by considering time steps when the option set is within the smallest $xn/3$ bins.  (As $n$ is tending to infinity, we ignore divisibility issues.) Call this type of time step ``good".  There are two possibilities for each good round: the increased bin will be within the least $2xn/3$ bins, or it won't.  Let $g$ be the number of good time steps.  Then the first or the second case will happen at least $g/2$ times.  If the first case happens at least $g/2$ times, then the last $xn$ bins will have at least $g/2$ balls, and we have a lower bound.  So all that remains is to find a lower bound on the number of balls in the last $xn$ bins if the second case happens at least $g/2$ times.

Consider the $(1-2x/3)n-1^{st}$ largest bin (i.e., the smallest bin outside of the least $2xn/3$).  Every time a case-two step occurs, that bin must be in a string of equally-loaded bins that stretches from some label greater than $(1-x/3)n$ to some label less than $(1-2x/3)n$.  That string's length decreases every time a case-two step occurs.  Therefore, after at most $(1-2x/3)n$ case-two steps, the string no longer contains the $(1-2x/3)n-1^{st}$ bin.  This means its load must have increased.  If there are $g/2$ case-two steps, then the $(1-2x/3)n-1^{st}$ bin must contain at least $$\frac{\text{number of case-two steps}}{(1-2x/3)n}=\frac{g}{2(1-2x/3)n}$$ balls.  There are $xn/3$ bins at least as full as that one within the last $xn$, so the least $xn$ bins must contain at least $$\left(\frac{xn}{3}\right)\left(\frac{g}{2(1-2x/3)n}\right)= \frac{gx}{2(3-2x)}$$ balls when the second case happens at least $g/2$ times.  

Combining the two possibilities, we see the least $xn$ bins contain at least $$(g/2) (\min \{1, x/(3-2x) \})=\frac{g}{6-4x}$$ balls overall.  The expected value of $g$ is $3^{-d}x^{d}m$, so the expected number of balls in the least $xn$ bins is at least $$3^{-d}x^{d+1}m/(6-4x).$$  

The full proof of Theorem \ref{alpha2} simply optimizes this argument by setting the three pieces of $x$, and the ratio of ``case one" to ``case two", to be unequal.  The ``optimal" breakdown of $xn$ into three pieces depends on $xn$, which is why three distinct bounds are given in the statement of Theorem \ref{alpha2}.  The breakdown is $(xn/d, 0, xn(1-1/d))$ in the first case, $(1, 0, nx-1)$ in the second case, and $(0,0,1)$ in the third.  The optimal ratio of cases is $$\frac{\alpha}{1-x+2\alpha}:\frac{1-x+\alpha}{1-x+2\alpha},$$ where $\alpha$ is the first term in the breakdown triples.
\qquad\end{proof}

Note that these two theorems hold regardless of $m$, $n$, and $d$.
For example, the expected number of balls in the least $\sqrt{n}$  bins under $\mathrm{GREEDY}$ is at most $n^{-d/2}m$.  We can also see that the smallest bin (i.e. the least $1/n$ fraction of bins) remains empty at least until $m =\Theta (n^{d})$ because $mx^{d}=m(1/n)^{d}=\frac{m}{n^{d}}$.

Both the upper and lower bounds hold for arbitrary $m$, $n$, and $d$.  However, they are not tight.  As we saw in the discussed examples, it is possible that an option set within the least $xn$ bins creates an increase in the greatest $(1-x)n$ bins.  This happens whenever there is a ``string" of equally-loaded bins that crosses the $xn$ boundary.  Furthermore, the specific location of the increase is determined by the length of the string.  For instance, return to the $(2,2,2,1,1,0)$ example at the beginning of this section; the increase happened in bin $B_{1(t+1)}$ because the equally-loaded string ended at bin $B_{1(t)}$. Understanding the behavior of such strings is therefore key to understanding the overall allocation.

\section{Relationships between bins}

The main goal of this section is to analyze the relative sizes of bin loads.  By definition, the behavior of GREEDY is  determined by the relative sizes of bin loads, not their absolute values.  As discussed above, understanding something as simple as when bins are equally loaded would be a big step towards understand the distribution.  We examine both relative loads and equal loads in this section.  First, we give a theorem about how bins' relative positions may change.

\begin{theorem}\label{gambler}
For any starting configuration of bins and balls, if bin $B_{i}$ has $ \delta n /(d-1)$ more balls than bin $B_{j}$, then the probability of bin $B_{i}$ becoming smaller than bin $B_{j}$ at any time in the future (i.e. after any number $m$ of balls has been added) under $\mathrm{GREEDY}$ is at most $e ^{-\delta}$.
\end{theorem}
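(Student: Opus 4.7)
Let $D(t) = b_i(t) - b_j(t)$, so $D(0) \geq \delta n/(d-1)$ and each step of $\mathrm{GREEDY}$ changes $D$ by $-1$, $0$, or $+1$. Since ``$b_i$ becomes smaller than $b_j$'' requires $D$ to first hit $0$, it suffices to bound $\Pr(\tau_0 < \infty)$ where $\tau_0 := \inf\{t : D(t) \leq 0\}$. The plan is a gambler's ruin estimate: show that $M(t) := e^{-\alpha D(t)}$ with $\alpha = (d-1)/n$ is a nonnegative supermartingale on $\{D>0\}$, then apply optional stopping. Since $M(\tau_0) \geq 1$ on $\{\tau_0 < \infty\}$, this immediately gives
$$\Pr(\tau_0 < \infty) \leq M(0) \leq \exp\bigl(-\alpha \cdot \delta n/(d-1)\bigr) = e^{-\delta}.$$

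The technical core is a one-step estimate: whenever $D(t) > 0$, the probabilities $q(t) := \Pr(B_i \text{ gets the ball})$ and $p(t) := \Pr(B_j \text{ gets the ball})$ satisfy $p/q \leq e^{-(d-1)/n}$ uniformly in the current configuration. I would obtain this from two one-sided bounds in terms of $g(w) := w^d - (w-1/n)^d$. For $q$, note that $B_i$ is the unique maximum of $S_t$, hence chosen, whenever $i \in S_t$ and $S_t$ avoids every bin other than $i$ with load at least $b_i$; writing $\tilde\beta$ for the fraction of such bins gives $q \geq g(1-\tilde\beta)$. For $p$, $B_j$ is chosen only if $j \in S_t$ and $S_t$ avoids every bin other than $j$ with load strictly greater than $b_j$; letting $\beta'$ be that fraction gives $p \leq g(1-\beta')$. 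The decisive structural observation is that $i$ itself belongs to $\{k \neq j : b_k > b_j\}$ but not to $\{k \neq i : b_k \geq b_i\}$, forcing $\beta' \geq \tilde\beta + 1/n$.

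Since $g$ is strictly increasing on $[1/n, 1]$ for $d \geq 2$, this yields $q/p \geq g(v+1/n)/g(v)$ with $v := 1-\beta'$, where the fact $i \in B'$ additionally forces $v \leq 1 - 1/n$. Writing $g(w) = d\int_{w-1/n}^w s^{d-1}\,ds$ and substituting $s = r + 1/n$ in the numerator integral,
$$\frac{g(v+1/n)}{g(v)} \geq \min_{r \in [v-1/n,\,v]} \Bigl(1 + \tfrac{1}{nr}\Bigr)^{d-1} = \Bigl(1 + \tfrac{1}{nv}\Bigr)^{d-1} \geq \Bigl(\tfrac{n}{n-1}\Bigr)^{d-1}.$$
The elementary identity $\log\bigl(n/(n-1)\bigr) = \sum_{k\geq 1} 1/(kn^k) > 1/n$ then gives $(n/(n-1))^{d-1} \geq e^{(d-1)/n}$, establishing $q/p \geq e^{(d-1)/n}$. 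The supermartingale property of $M$ then follows from the standard check $p/\lambda + q\lambda + (1-p-q) \leq 1$ with $\lambda = e^{-\alpha}$.

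The main obstacle is the pair of configuration-uniform one-step estimates, and in particular extracting the precise $1/n$ shift: the tightness of the exponent $\delta$ depends on (i) $i$ contributing to $B'$ but not to $\tilde B$, and (ii) the resulting constraint $v \leq 1 - 1/n$ rather than the weaker $v \leq 1$, since $(1+1/n)^{d-1} < e^{(d-1)/n}$ would just miss the target. Everything else reduces to routine martingale analysis.
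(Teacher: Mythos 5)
Your proof is correct, and while the underlying engine is the same (a biased-random-walk estimate on the load gap), the route is genuinely different from the paper's. The paper conditions on one of the two bins receiving a ball, couples the resulting $\pm 1$ process with a homogeneous walk whose bias matches the worst case (the two top-ranked bins), and then invokes the explicit gambler's-ruin formula together with an asymptotic estimate of $\log\bigl((n^d-(n-1)^d)/((n-1)^d-(n-2)^d)\bigr)\sim (d-1)/n$. You instead run an exponential supermartingale $e^{-\alpha D(t)}$ with $\alpha=(d-1)/n$ and close with Ville's maximal inequality, which collapses the coupling and the gambler's-ruin computation into one optional-stopping step. The more substantive difference is in the one-step bound: the paper works with a total ordering by rank and writes the bias as $\frac{i^d-(i-1)^d}{i^d-(i-1)^d+j^d-(j-1)^d}$, which tacitly elides the treatment of ties; you instead sandwich $q$ and $p$ between $g(1-\tilde\beta)$ and $g(1-\beta')$ using the load-threshold sets $\{k\neq i: b_k\geq b_i\}$ and $\{k\neq j: b_k> b_j\}$, and the shift $\beta'\geq\tilde\beta+1/n$ (from $i$ being counted in the latter but not the former) is exactly what recovers the paper's worst-case ratio $\frac{n^d-(n-1)^d}{(n-1)^d-(n-2)^d}$, but now as a clean inequality valid for all $n$ rather than an asymptotic statement. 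Your integral rewriting $g(w)=d\int_{w-1/n}^w s^{d-1}\,ds$ and the resulting $(1+1/(nv))^{d-1}\geq (n/(n-1))^{d-1}\geq e^{(d-1)/n}$ is tidier than the paper's Taylor expansion, and correctly identifies why the constraint $v\leq 1-1/n$ (rather than $v\leq 1$) is indispensable for hitting the exponent $\delta$ exactly. What the paper's version buys in exchange is an explicit gambler's-ruin constant that is slightly sharper than $e^{-\delta}$ for moderate $n$; what yours buys is a non-asymptotic, tie-aware argument with no coupling bookkeeping.
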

\begin{proof}
The key idea of this proof is to view the changing gap between loads of the two bins as a random walk.  The walk reaching zero corresponds to the two bins having equal loads.  By using standard bounds on the probability of a random walk reaching zero, we are able to bound the probability of the bins becoming equally loaded, and thus the probability of the bins swapping relative positions.

Fix two bins, without loss of generality $B_{1}$ and $B_{2}$, and consider $\left| b_{1}-b_{2} \right|$ at each time step.  For most steps, this gap doesn't change; usually a ball is placed in neither $B_{1}$ nor $B_{2}$.  So we condition on one of those two bins getting a ball.  If the larger bin gets a ball, $\left| b_{1}-b_{2} \right|$ increases by 1.  If the smaller bin gets a ball, $\left| b_{1}-b_{2} \right|$ decreases by 1.  This is a random walk with a reflecting barrier at 0.  Since a larger bin is more likely to get a ball than a smaller bin, it is biased in favor of +1.

We now bound the bias.  Suppose $B_{1}$ and $B_{2}$ are currently ranked the $i^{th}$ and $j^{th}$ bins, with $i>j$, where the $1^{st}$ bin is the smallest and the
$n^{th}$ bin is the biggest.  (This is the opposite of our earlier convention, but makes the following computations much simpler.)  
The probability, given that one of
the two bins gets a ball, of the bigger bin getting it is
$$\frac{i^{d}-(i-1)^{d}}{i^{d}-(i-1)^{d}+j^{d}-(j-1)^{d}}.$$

We now minimize this probability.  First, note the minimum must be at $i$ and $j$ such that $j+1=i$; the closer together two bins are, the closer together their respective probabilities of getting a ball are.  So it is enough to minimize $$\frac{i^{d}-(i-1)^{d}}{i^{d}-(i-1)^{d}+(i-1)^{d}-(i-2)^{d}}=\frac{i^{d}-(i-1)^{d}}{i^{d}-(i-2)^{d}},$$ which happens at $i=n$.  That is, the probability of the larger of two bins getting a ball is minimized when they are the largest and second largest bins.  In that case the larger bin gets a ball with probability $$\frac{n^{d}-(n-1)^{d}}{n^{d}-(n-2)^{d}}.$$

We will now formalize the coupling of a random walk with the load gap.  For ease of notation, let $t_{k}$ be the $k^{th}$ time step at which bin $B_{1}$ or $B_{2}$ gets a ball.  Let $X_{k}$ be the position of a random walk with bias $\epsilon$ at time $k$, where $$\frac{1+\epsilon}{2}=\frac{n^{d}-(n-1)^{d}}{n^{d}-(n-2)^{d}}.$$  That is, the random walk has the same bias as that between the largest and second largest bins.

We now couple the sequence $\left| b_{1}(t_{1})-b_{2}(t_{1})\right|, \left| b_{1}(t_{2})-b_{2}(t_{2})\right|, \ldots$ with $X_{1}, X_{2}, \ldots$ so that the random walk takes a -1 step every time the gap shrinks and may also take a -1 step even if the gap increases, in such a way that the probability of a +1 step in the random walk is always $\frac{1+\epsilon}{2}$.  That is, if at time $t_k$ the larger bin has probability $\gamma$ of being chosen over the smaller bin, then if $$\left| b_{1}(t_{k})-b_{2}(t_{k})\right| > \left| b_{1}(t_{k-1})-b_{2}(t_{k-1})\right|,$$ $$X_{k-1}+1=X_{k}$$ with probability $$\frac{1+\epsilon}{2\gamma}$$ and $$X_{k-1}-1=X_{k}$$ with probability $$1-\frac{1+\epsilon}{2\gamma}.$$  If $$\left| b_{1}(t_{k})-b_{2}(t_{k})\right| < \left| b_{1}(t_{k-1})-b_{2}(t_{k-1})\right|,$$then $$X_{k}=X_{k-1}-1$$ with probability 1.  

If $B_{1}$ and $B_{2}$ switch relative positions, there exists a time step $t$ at which $b_{1}(t)=b_{2}(t)$.  Therefore they switch positions only if there exists $t$ such that $\left|b_{1}(t)-b_{2}(t)\right|=0$.  The coupling above shows that if the bin load gap reaches 0 at time $t$, the random walk must have also reached 0 at time $t$ or earlier.  We now bound the probability that the random walk reaches 0.

This bound uses a gambler's ruin argument (see e.g. \cite{feller} for more details).  The probability of ruin starting from position $x$ is $$\left( \frac{(1-\epsilon)/2}{(1+\epsilon)/2}\right)^{x}=\left(\frac{1-\epsilon}{1+\epsilon}\right)^{x}=\left(\frac{n^{d}-(n-1)^{d}}{(n-1)^{d}-(n-2)^{d}}\right)^{x}.$$    Therefore to bound this probability by $e^{-\delta}$, we need 
$$x\geq \frac{\delta}{\log\left(\frac{n^{d}-(n-1)^{d}}{(n-1)^{d}-(n-2)^{d}}\right)}.$$  Notice that as 
$n\to \infty$, 
$\frac{n^{d}-(n-1)^{d}}{(n-1)^{d}-(n-2)^{d}}\to1+\frac{d-1}{n-(3/2)(d-1)}$.  Therefore 
\\
$\log \left( \frac{n^{d}-(n-1)^{d}}{(n-1)^d-(n-2)^{d}}\right)\sim \frac{d-1}{n-(3/2)(d-1)}$, and $$
\frac{\delta}{\log\frac{n^{d}-(n-1)^{d}}{(n-1)^{d}-(n-2)^{d}}} \sim \frac{\delta (n-(3/2)(d-1))}{(d-1)} <\frac{\delta n}{(d-1)}.$$

So the gambler's ruin argument shows that for $x > \delta n/(d-1)$ and $n$ sufficiently large, a random walk starting at $x$ with bias $\epsilon$ will reach 0 with probability less than $e^{-\delta}$.  By coupling this walk with the load gap of the bins, we see the probability of two bins switching position under $\mathrm{GREEDY}(m,n,d)$ is less than $e^{-\delta}$ if they start with loads at least $\delta n/(d-1)$ apart.
\qquad\end{proof}

In other words, Theorem \ref{gambler} tells us that bins' relative orders stabilize once the gaps between them are linear in $n$.  We can combine this with our previous results in specific cases to find when relative positions should stabilize.

For example, if  $m=\delta n^{2}/(d-1)$ it is unlikely that the least-loaded bins will overtake the heaviest-loaded:  For this value of $m$, the largest bin must have load at least $\delta n/(d-1)$.  On the other hand, the probability of the total load in the least $xn$ bins being greater than $x^{d}mn$ is at most $1/n$ by Markov's inequality and Theorem \ref{alpha1}.  So with probability $1-1/n$ these bins have at most $x^{d}mn$ balls overall.  When $d >4$, we may let $x=\left( \frac{d-1}{\delta}\right)^{2/d}n^{-4/d}$ and thus $x^{d}mn<1$.  Then the least $x n = \left( \frac{d-1}{\delta}\right)^{2/d}n^{1-4/d}$ bins are empty with probability at least $1-1/n$.  Therefore we can apply Theorem \ref{gambler} to see that once $\delta n^{2} / (d-1)$ balls have been allocated, each of the least $xn$ will become the largest bin at any time in the future with probability at most $1-(1-e^{-\delta})(1-1/n) \sim e^{-\delta}$.

We now know that, although GREEDY is defined in terms of relative bin loads, in fact the absolute differences in bin loads drive GREEDY's behavior.  A difference of 0 corresponds to equally-loaded bins, which are key in the proofs of Theorems \ref{alpha1} and \ref{alpha2}, and a significant difference in bin loads is exactly the condition necessary to apply Theorem \ref{gambler}.  This motivates our final set of results.  We give conditions under which the number of equally-loaded bins is bounded from above, and extend this to conditions under which most gaps between loads are bounded from below.   These theorems are a significant step towards understanding GREEDY in full generality. The proofs of both of these theorems use a lemma about how choices at an early time step can affect the final allocation.

 \begin{lemma}\label{paradox}
 For any allocation of balls $\mathbf{b}=(b_{1}, b_{2}, \ldots b_{n})$, and any $i \neq j$, consider the results of $\mathrm{GREEDY}(m,n,d)$ on initial configurations of $\mathbf{b}+\mathbf{e}_{i}$ and $\mathbf{b}+\mathbf{e}_{j}$ (here $\mathbf{e}_{i}$ and $\mathbf{e}_{j}$ are the standard unit vectors with 1 in the $i^{th}$ ($j^{th})$ position and 0 elsewhere).  For $m=O( n \log n)$, the final load of $B_{i}$ starting from $\mathbf{b}+\mathbf{e}_{i}$ will be greater than the final load of $B_{i}$ starting from $\mathbf{b}+\mathbf{e}_{j}$ with high probability.  In particular, for $m=cn\log n$, the final load from $\mathbf{b}+\mathbf{e}_{i}$ will be greater with probability at least  $1-n^{cd^{2}-1}$.
 \end{lemma}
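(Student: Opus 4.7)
I would couple the two runs of $\mathrm{GREEDY}$ by using the same option multisets $S_1,\ldots,S_m$ and the same tie-breaking randomness in both executions; let $\mathbf{b}^{(1)}(t)$ and $\mathbf{b}^{(2)}(t)$ denote the resulting configurations, and set $\Delta(t):=\mathbf{b}^{(1)}(t)-\mathbf{b}^{(2)}(t)$, so that $\Delta(0)=\mathbf{e}_i-\mathbf{e}_j$ and the goal becomes $Pr(\Delta_i(m)\geq 1)\geq 1-n^{cd^2-1}$. Note $\sum_k \Delta_k(t)=0$ throughout, and each coordinate of $\Delta$ changes by at most $1$ per time step.

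\textbf{When can $\Delta_i$ drop?} Only a \emph{conflict step} (a step at which the two runs place their balls in different bins) changes $\Delta$. A direct case analysis shows: if $\Delta_i(t)\geq 1$, then $\Delta_i(t+1)<\Delta_i(t)$ requires that run 2 places its ball in $B_i$ while run 1 does not, which in turn forces some $k\in S_t$ with $b_k^{(1)}(t)>b_i^{(1)}(t)$, and hence $\Delta_k(t)\geq \Delta_i(t)+1\geq 2$. Iterating this observation, a bin $k$ with $\Delta_k(t)\geq \ell$ must itself have inherited its surplus either from the initial $\mathbf{e}_i-\mathbf{e}_j$ discrepancy or from an earlier conflict step involving a bin with $\Delta\geq \ell+1$. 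This recursion is the structural backbone of the argument: every decrease of $\Delta_i$ is certified by a chain of earlier conflicts involving strictly larger $\Delta$-surpluses, all ultimately rooted at the initial seed.

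\textbf{Witness tree and the main obstacle.} Assuming the bad event $\Delta_i(m)\leq 0$, I would canonically extract a rooted tree whose root is the last time step at which $\Delta_i$ dropped and whose children at each internal node are the preceding conflict steps responsible for the $\Delta$-surpluses of the bins in that node's option set, with leaves corresponding to appearances of the initial discrepancy. The branching factor is at most $d$ (the size of $S_t$) and the depth is bounded by the largest $\Delta$-value ever reached, which is in turn polylogarithmic in $n$ because the absolute loads themselves are (via Theorem~\ref{dumb}). A given conflict at a given time involving a given bin occurs with probability $O(d/n)$, so summing over all tree shapes, time assignments in $[m]=[cn\log n]$, and bin labels in $[n]$ should multiply to the claimed $n^{cd^2-1}$, once $c$ is small enough that the $(cd\log n)^{d}$ counting factors at each level do not overwhelm the $n^{-d}$ probability factors contributed by each level's conflicts. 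The main obstacle will be defining the witness tree \emph{canonically}, so that distinct bad histories produce distinct trees and the union bound is not silently over-counted; a secondary subtlety is correctly handling conflicts where $\Delta_i$ falls because $B_i$ merely ties for the maximum in the fuller configuration and loses the tie-break, so the ``witnessing'' bin $k$ has $\Delta_k=\Delta_i$ rather than $\Delta_k>\Delta_i$, a case that must be folded into the same recursive framework.
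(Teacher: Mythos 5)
Your coupling (same option sets, same tie-break randomness) and the structural observation that $\Delta_i$ can only drop when some $k\in S_t$ has $\Delta_k>\Delta_i$ are both correct and faithful to what's implicitly going on in the paper. But from there you propose a witness-\emph{tree} union bound, which is both unnecessary and not actually carried through, and this is where the proposal runs into trouble. The paper's argument is lighter: it observes that the set of bins whose loads differ between the two coupled runs (the ``influence set'' $T_t$) can only grow via option sets that intersect $T_{t-1}$, and that a paradox requires some later $S_r$ to contain $B_i$ or $B_j$ together with an already-influenced bin. That forces the existence of a single \emph{chain} $S_{t_1},\ldots,S_{t_s}=S_r$ of pairwise-intersecting option sets starting and ending at $\{B_i,B_j\}$. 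The probability of any such chain is bounded directly by $\sum_s 4\binom{t}{s}(d^2/n)^{s+1}\leq (4d^2/n)e^{td^2/n}$, which for $t=cn\log n$ is $O(n^{cd^2-1})$. No tree, no tracking of the numerical size of $\Delta$, no depth bound on the surplus. Your own case analysis in fact supports this: a drop in $\Delta_i$ needs only \emph{one} bin $k$ with $\Delta_k>\Delta_i$, and $k$ in turn needs only one earlier conflict, so a path, not a $d$-ary tree, is what the recursion certifies.

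Two concrete problems with the tree route as written. First, the depth bound: you invoke Theorem~\ref{dumb} for the claim that loads (hence $\Delta$-values) are polylogarithmic, but Theorem~\ref{dumb} is stated for $m=cn$ balls, not $m=cn\log n$; it does not apply here, and even if the coupling idea behind it were re-run for $m=cn\log n$ you would need to redo the Chernoff calculation (the answer is $\Theta(\log n)$, but that is not a citation you currently have). Second, the union bound is left as a hope (``should multiply to the claimed $n^{cd^2-1}$''), and your own side remark about $(cd\log n)^d$ counting factors per level signals that you haven't checked that the sum over tree shapes, time assignments, and bin labels closes. A $d$-ary tree of depth $\Theta(\log n)$ has up to $n^{\Theta(\log d)}$ nodes, so the enumeration is not obviously controllable. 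Also, the concern about over-counting (``distinct bad histories produce distinct trees'') is misdirected: for a union bound you need the map from bad histories to trees to be \emph{total}, not injective; over-counting only loosens the bound. The canonical-chain argument sidesteps all of this, and I'd encourage you to simplify in that direction.
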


This lemma might seem obvious; of course placing a ball in bin $B_i$ at time $t$ should increase the load of bin $B_i$ at time $t+cn\log n$.  However, this is not always the case.  For example, suppose $\mathbf{b}=(1,1,1)$, $i=2$, and $j=1$ (so we are comparing $\mathrm{GREEDY}$ on $\mathbf{b}+\mathbf{e}_{1}=(2,1,1)$ and $\mathbf{b}+\mathbf{e}_{2}=(1,2,1)$).  Further suppose that the option sets are $S_{1}=S_{2}=\{ 1, 3\}$ and $S_{3}=S_{4}=\{2,3\}$.  The following table shows possible outcomes.

\begin{table}[htdp]
\caption{Loads under different configurations and tie breaks}
\begin{center}
\begin{tabular}{|c|c||c|c||c|c|}
\hline
 &&$B_{2}>B_{3}$&$B_{3}>B_{2}$& $B_{1}>B_{3}$&$B_{3}>B_{1}$\\
 \hline
$t=0$&-&$211$&$211$&$121$&$121$ \\
\hline
$t=1$&$S_1=\{1,3\}$&$311$&$311$&$221$&$122$\\
\hline
$t=2$&$S_{2}=\{ 1, 3\}$&$411$&$411$&$321$&$123$\\
\hline
$t=3$&$S_3=\{2,3\}$&$421$&$412$&$331$&$124$\\
\hline
$t=4$&$S_{4}=\{2,3\}$&$431$&$413$&$341$&$125$\\
\hline
\end{tabular}
\end{center}
\label{default}
\end{table}

  Recall that $\mathrm{GREEDY}$ breaks ties uniformly at random.  In the above table, $B_{i}>B_{j}$ indicates that the tie between bins $B_{i}$ and $B_{j}$ is broken in favor of bin $B_{i}$.  Note that a tie is broken at time $t=3$ for initial configuration $(2,1,1)$ and at time $t=1$ for initial configuration $(1,2,1)$.  We see that, if the tie is broken in favor of $B_{2}$ at time $t=3$ and in favor of $B_{3}$ at time $t=1$, $b_{2}(4)=3$ starting from $(2,1,1)$ and $b_{2}(4)=2$ starting from $(1,2,1)$.  The effect of bin $B_{2}$ being larger initially is to make $B_{2}$ smaller after more balls have been placed.  So in fact Lemma \ref{paradox} is nontrivial.

\begin{proof}
 Our example paradox relied on the option sets intersecting.  Ball placement at time 1 influenced placement at time 2, for example, because the option sets at times 1 and 2 were the same.  In general, an extra ball in $B_{i}$ can cause $B_{j}$ to increase only if there is an intersection or chain of intersections between the option sets containing $B_{i}$ and the option sets containing $B_{j}$.  For example, if the option sets are $\{3,i\}, \{3,4\}, \{5,7\}, \{4,j\}$, it is possible that the behavior of $B_{i}$ can influence $B_{j}$; the decision between $i$ and 3 made for $S_1$ affects the decision between 3 and 4 for $S_2$, which affects the decision between 4 and $j$ for $S_4$.  To bound the probability of a paradox, we study the structure of option set intersections.
 
   We call the elements affected by the choice of $B_i$ or $B_j$ an ``influence set".   In our previous example with option sets $\{3,i\}, \{3,4\}, \{5,7\}, \{4,j\}$, the influence set is $\{i, 3, 4, j\}$; although 5 and 7 appear as options, there is no intersection or chain of intersections for $B_i$ and $B_j$ that contain 5 or 7.

For ease of notation, assume we are comparing a placement in Bin 1 with Bin 2 (so the starting configurations are $\mathbf{b}+\mathbf{e}_{1}$ and $\mathbf{b}+\mathbf{e}_{2}$).  Let $T_{t}$ be the influence set at time $t$.  Initially, $T_{0}=\{1, 2 \}$.  Given a sequence of option sets $\{S_{t}\}$, we can define $T_{t}$, the influence set at time $t$, recursively.  $$T_{t}=T_{t-1} \cup \{ x \vert x \in S_{t} \text{ and } S_{t} \cap T_{t-1} \neq \emptyset \}$$ In our example, $T_0=\{i,j\}, T_1=\{i,j, 3\}, T_2=\{i,j,3,4\}, T_3=\{i,j,3,4\}, T_4=\{i,j,3,4\}$.
 
As observed earlier, a paradox may arise only if a subsequent option set $S_{r}$ contains $B_{1}$ or $B_{2}$ and some other bin which was already influenced by the initial choice of $\mathbf{e}_{1}$ or $\mathbf{e}_{2}$.  (In our example, this happens for $S_4$.)  We now bound the probability of this happening.
 
 If $S_{r}$ does contain both some index that is in $T_{r-1}$ and also $B_{1}$ or $B_{2}$, then there must be a subsequence of option sets $S_{t_{1}}, S_{t_{2}}, \ldots S_{t_{s-1}}, S_{t_{s}}=S_{r}$ such that for all $i \in [s]$, $S_{t_{i}}$ has a non-empty intersection with $S_{t_{i-1}}$, and $S_{t_{1}}$ contains $B_{1}$ or $B_{2}$.  There are $\binom{r}{s}$ choices for indices of a subsequence of length $s$.  The probability that any particular length-$s$ subsequence is intersecting is bounded by $(d^{2}/n)^{s}$.  There are two choices for $S_{t_{1}}$ and $S_{r}$ (to contain $B_{1}$ or $B_{2}$), and the probability of either is less than $d^{2}/n$.  So the overall probability of an intersecting subsequence of length $s$ is bounded by $4\binom{t}{s}(d^{2}/n)^{s+1}$.
 
 Then for the existence of such a sequence of any length, we have the bound $$4\sum_{s=1}^{t} \binom{t}{s}(d^{2}/n)^{s+1}\leq (4d^{2}/n)(1+d^{2}/n)^{t} \leq (4d^{2}/n)e^{td^{2}/n}.$$  Note that when $t= c n \log n$ for $c$ any constant, this is $O(n^{cd^{2}-1})$.  In particular, if $cd^{2}<1$, this is $o(1)$.
 \qquad\end{proof}

With this lemma in hand we are ready to prove Theorem \ref{equal}.  
\begin{theorem}\label{equal}
For any $\delta, m,n,d$, if $\alpha, \beta, \epsilon,\lambda,m',t$ are such that 
 \begin{itemize}
\item $\frac{(\delta^{d}m'-\alpha)\delta}{(6-4\delta)n}-\frac{2^{d}\epsilon^{d}m'+\beta}{\epsilon n}>\epsilon dt+\lambda$
 \item $m'+t=m$
 \end{itemize}
 then under $\mathrm{GREEDY}(m,n,d)$, for any $\gamma$, any pair of bins outside a set of size $\delta n$ are not equal with probability at least 
$$\scriptstyle{(1-e^{-2\alpha^2/m'})(1-e^{-2\beta^{2}/m'})
(1-e^{-2\lambda^{2}/t})(1-2e^{-(\gamma^2 n)/(2^{d+3}\epsilon^{d-1}t)})
(1-(4d^{2}/n)e^{td^{2}/n})(1-\sqrt{2/\pi((2d\epsilon^{d-1}t/n)-\gamma)})}.$$
 \end{theorem}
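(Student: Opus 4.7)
The plan is to decompose $\mathrm{GREEDY}(m,n,d)$ into an initial phase of $m'$ balls followed by a final phase of $t = m-m'$ balls. During phase 1, I create a quantifiable load gap between the bottom $\delta n$ bins and a subset of size $\epsilon n$ just above; during phase 2, I argue this gap is preserved with high probability, so that two specific bins outside the bottom $\delta n$ almost surely end with distinct loads.

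For phase 1, I would invoke Theorem \ref{alpha2} to lower bound the total load of the smallest $\delta n$ bins (giving roughly $\delta^{d+1}m'/(6-4\delta)$) and Theorem \ref{alpha1} to upper bound the total load of the $\epsilon n$-sized subset just above (giving roughly $2^{d}\epsilon^{d}m'$ via application at the shifted threshold $(\delta+\epsilon)n$). Each of these totals is a sum over $m'$ time steps of the indicator ``the option set falls in the region,'' so Hoeffding's inequality with parameters $\alpha$ and $\beta$ yields the concentration factors $(1-e^{-2\alpha^{2}/m'})$ and $(1-e^{-2\beta^{2}/m'})$. Converting to per-bin bounds by pigeonhole then gives a load gap of at least $\frac{(\delta^{d}m'-\alpha)\delta}{(6-4\delta)n} - \frac{2^{d}\epsilon^{d}m'+\beta}{\epsilon n}$ between a bin in the bottom $\delta n$ and a bin just above it.

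For phase 2, I would control how this gap can close over the remaining $t$ steps. The aggregate growth of the $\epsilon n$-region over $t$ steps is bounded by $\epsilon d t$ via Theorem \ref{alpha1} applied afresh, with Hoeffding deviation $\lambda$ contributing $(1-e^{-2\lambda^{2}/t})$; the stated inequality in the theorem condition is precisely the assertion that the phase-1 gap exceeds this worst-case phase-2 closure. Lemma \ref{paradox} contributes $(1-(4d^{2}/n)e^{td^{2}/n})$, ensuring that the future evolutions of the two chosen bins are not paradoxically coupled through overlapping influence sets during phase 2. To rule out exact equality, observe that a bin near the $\delta n$ boundary is selected with probability approximately $d\epsilon^{d-1}/n$ per step, so the difference of the two loads performs a near-symmetric $\pm 1$ walk of about $2d\epsilon^{d-1}t/n$ steps; an Azuma-type bound on the actual step count with error $\gamma$ contributes $(1-2e^{-\gamma^{2}n/(2^{d+3}\epsilon^{d-1}t)})$, and Stirling's formula for the return probability of a symmetric walk contributes $(1-\sqrt{2/(\pi((2d\epsilon^{d-1}t/n)-\gamma))})$. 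Multiplying all independent failure probabilities yields the claimed bound.

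The main obstacle will be the random walk argument: under GREEDY the two bins are not literally independent (the winner at each step depends on the full configuration), and the difference walk has a small bias coming from the mechanism of Theorem \ref{gambler}. Lemma \ref{paradox} is the essential decoupling tool, but showing the walk is close enough to symmetric for the Stirling estimate to apply, while simultaneously coordinating the six parameters $\alpha,\beta,\lambda,\gamma,\epsilon,\delta$ with the split $m',t$ so that all conditions in the statement are satisfied, is where most of the technical work will lie.
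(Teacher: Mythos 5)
You have the right skeleton: the two-phase split $m = m' + t$, the use of Theorems \ref{alpha2} and \ref{alpha1} plus Hoeffding/Chernoff in phase~1 to establish a quantified gap between bins $A,B$ and the bottom $\epsilon n$ bins $W_\epsilon$, the bound $\epsilon dt + \lambda$ on how much $W_\epsilon$ can close that gap in phase~2, the counting of $q \approx 2d\epsilon^{d-1}t/n$ ``important'' option sets (those falling entirely inside $W_\epsilon \cup \{A\}$ or $W_\epsilon \cup \{B\}$), and the role of Lemma \ref{paradox}. The first five probability factors are accounted for correctly.

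The gap is in the last factor, and you flagged it yourself: you propose to treat the load difference as a ``near-symmetric $\pm 1$ walk'' and invoke Stirling for the return probability, acknowledging that making this rigorous ``is where most of the technical work will lie.'' That is precisely the part left unsolved, and it cannot be patched by arguing the walk is approximately symmetric: the walk is not Markovian, its bias can drift with the evolving configuration, and even within the important rounds the ball is not guaranteed to land on $A$ or $B$. The paper's actual device sidesteps all of this combinatorially. Reveal everything except the single $A$-vs-$B$ slot in each important option set; by symmetry the residual randomness is a uniform point of $\{A,B\}^{q}$. Lemma \ref{paradox} gives that, with probability $1-(4d^2/n)e^{td^2/n}$, none of these choices interact through a chain of intersecting option sets, so the map from $\{A,B\}^q$ to the final ordered pair of loads $(b_A, b_B)$ is coordinatewise \emph{monotone}. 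Consequently the set of sequences yielding $b_A = b_B$ is an antichain in the Boolean lattice, and Sperner's theorem bounds its size by $\binom{q}{\lfloor q/2 \rfloor}$, giving the probability $\binom{q}{\lfloor q/2\rfloor}/2^q \sim \sqrt{2/(\pi q)}$. This is what produces the $\bigl(1 - \sqrt{2/\pi((2d\epsilon^{d-1}t/n)-\gamma)}\bigr)$ factor. The numerically identical $\binom{q}{q/2}/2^q$ from the symmetric-walk heuristic is a coincidence of the final expression, not a justification; the Sperner/antichain step is the idea your proposal is missing.
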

 
 Loosely, this theorem states that when $m$ is large and $d$ is small, an arbitrary pair of bins is unlikely to be equally loaded.  The proof has several stages.  We first analyze the option sets and determine which types of option sets have an effect on the final loads, as in Lemma \ref{paradox}.  We then count the number of significant option sets, again as in Lemma \ref{paradox}.  Finally, we bound how likely it is that a sequence of option sets will have the wrong effect.

 \begin{proof} 
Let the number of balls to be distributed be $m= m' +t$.  We fix two bins, $A$ and $B$, and bound the probability that they have the same loads after all $m$ balls have been allocated.  We allocate the balls in two phases.  In the first phase, allocate $m'$ balls. The ``exceptional" set of size $\delta n$ is determined at this point.  We then analyze the effect of the remaining $t$ steps on $A$ and $B$, assuming they are in the set of $(1-\delta)n$ unexceptional bins. 
 
 In analyzing the last $t$ steps, we use $W_{\epsilon}$, a set of bins that are much smaller than bin $A$ or bin $B$ at time $m'$.  For ease of notation, let $a(m')$ (or $b(m')$ )be the loads of $A$ (or $B$) at time $m'$.  Let $W_{\epsilon}$ be the least $\epsilon n$ bins at time $m'$.  We choose $\epsilon$ such that bins in $W_{\epsilon}$ each have loads at most $\min \{a(m), b(m)\}-g$.  That is, the ``gap" between the loads of $A$ and $B$ and any bin in the least $\epsilon n$ is at least $g$.
 
We are now ready to consider the final $t$ rounds.  We first reveal the rounds with option sets that either don't contain $A$ or $B$, or contain $A$ or $B$ and at least one bin not in $W_{\epsilon}$.  Call the remaining option sets ``important".  That is, $S$ is important if $S \subseteq W_{\epsilon} \cup \{A\}$ or $S \subseteq W_{\epsilon} \cup B$.  We choose gap size $g$ so that over the course of the final $t$ rounds, bins in $W_{\epsilon}$ are likely to remain below the loads of $A$ and $B$.  Therefore if an option set is important, it is likely that $A$ or $B$ gets a ball.  Let $q$ be the number of rounds with important option sets.

Now reveal all the non-$A$ or $B$ elements of the $q$ important option sets.  Each important option set contains exactly one of $A$ or $B$.  So there are $2^{q}$ possibilities, $\{A, B\}^{q}$, once the other elements are revealed.  Create a partial ordering $<_{AB}$ by setting $A < B$ (so, e.g., $ABBAB <_{AB} BBBAB$).  

With probability $1-O((d^2/n)e^{td^2/n})$, this partial ordering corresponds to $<_{\ell}$, ordering by bin loads, where  $\mathbf{v}<_{\ell}\mathbf{u}$ if the $\mathbf{v}$ sequence of $A$s and $B$s would result in fewer balls in bin $B$ and more in bin $A$ than the $\mathbf{u}$ sequence. To see this, recall Lemma \ref{paradox}.  The probability of a chain of intersection within the option sets is at most $(4d^{2}/n)e^{td^{2}/n}$.  Given that there is no chain of intersection, the placement of a ball into $A$ or $B$ at any of the important steps does not increase the load of the other bin.  If $\mathbf{v}<_{AB}\mathbf{u}$, the $\mathbf{v}$ sequence will generate a smaller $B$ than the $\mathbf{u}$ sequence.
 
Therefore the set of revealed sequences such that $A$ and $B$ have the same load is an anti-chain under $<_{AB}$ with probability $1-O((d^2/n)e^{td^2/n})$.  By Sperner's Lemma, it has size at most $\binom{q}{q/2}$ with the same probability.   

Now put these assumptions together.  If there are exactly $q$ important option sets, if the gap between the last $\epsilon n$ bins and $A$ and $B$ is as expected, and if the sequence of important sets is non-intersecting, the probability of $A$ and $B$ having the same number of balls after all $m$ balls have been distributed is at most $\binom{q}{q/2}/2^{q}$.   
 
It remains to find values for $\epsilon$, $\delta$, $g$, $t$, $q$, and $\gamma$.  First consider $\delta$.  Note that, if $A$ and $B$ are in the upper $1-\delta$ proportion of bins at time $m'$, then $A$ and $B$ have at least as many balls as the $\delta n^{th}$ bin at time $m'$.  That bin has at least as many balls as the average of the least $\delta n$ bins' loads.  

Recall the proof of Theorem \ref{alpha2} used the expected number of ``good" steps and multiplied it by a correction factor to discount the times a ball placed within the least $\delta n$ bins moved outside the least $\delta n$.  In fact if the number of good steps is $s$, then the number of balls in the least $\delta n$ is at least $s\delta/(6-4\delta)$.   Note that the option sets are distributed uniformly, so we can use the standard Chernoff bound $$Pr(s<\delta^{d}m'-\alpha)<e^{-2\alpha^{2}/m'}$$ to see that the number of balls in the last $\delta$ bins after the first $m'$ steps is near the expectation, i.e. at least $(\delta^{d}m'-\alpha)\delta/(6-4\delta)$, with probability at least $1-e^{-2\alpha/m'}$.  Because the $\delta n^{th}$ bin has at least as many balls as the average, if $A$ and $B$ are in the upper $(1-\delta)n$ bins, they will each have at least $$(\delta^{d}m'-\alpha)/(n)(6-4\delta)$$ balls with probability at least $$1-e^{-2\alpha^{2}/m'}.$$

 Now turn to $\epsilon$.  Suppose the number of balls in the least $2\epsilon n$ bins at time $m'$ is $x$.  Then the $\epsilon n^{th}$ smallest bin would contain at most $x/\epsilon n$ balls. 
Recall that $x$ is at most the number of times the option set is within the last $2\epsilon n$.  The Chernoff bound tells us that $$Pr (x > (2\epsilon)^{d}m'+\beta)< e^{-2\beta^{2}/m'}.$$  So the least $\epsilon n$ bins will each have loads at most $$\frac{2^{d}\epsilon^{d}m'+\beta}{\epsilon n}$$ with probability at least $$1-e^{-2\beta^{2}/m'}.$$
 
 We can combine these two results to see that, with probability at least $$(1-e^{-2\alpha^2/m'})(1-e^{-2\beta^{2}/m'}),$$ the gap between $A$ or $B$ and any bin within the least $\epsilon n$ will be at least $$\frac{(\delta^{d}m'-\alpha)\delta}{(6-4\delta)n}-\frac{2^{d}\epsilon^{d}m'+\beta}{\epsilon n}.$$ 
 
 Recall that we assumed the gaps between $A$ or $B$ and $W_{\epsilon}$ at time $m'$ were so large that $A$ and $B$ would still be larger than $W_{\epsilon}$ at time $m'+t$.  We now determine exactly how large a gap is necessary to guarantee it will not be closed after $t$ steps.  
 
 Each bin in $W_{\epsilon}$ may increase only if it is a member of an option set.  So it is enough to bound the number of times any bin in $W_{\epsilon}$ appears in an option set during the last $t$ time steps.  Again, note that the number of times $x$ that bins in $W_{\epsilon}$ are an option can be Chernoff bounded: $Pr(x>\epsilon dt+\lambda)\leq e^{-2\lambda^{2}/t}$.
 
 So overall, with probability at least $$(1-e^{-2\alpha^2/m'})(1-e^{-2\beta^{2}/m'})(1-e^{-2\lambda^{2}/t})$$ we know the gap will be bigger than the increase in the smallest bins: $$\frac{(\delta^{d}m'-\alpha)\delta}{(6-4\delta)n}-\frac{2^{d}\epsilon^{d}m'+\beta}{\epsilon n}> \epsilon dt+\lambda.$$ 
 
 We now turn to $t$, $q$, and $\gamma$. 
 The probability $p$ of an option set being important is $$2d\epsilon^{d-1}/n \leq p=2((\epsilon+1/n)^{d}-\epsilon^{d})\leq 2(2^{d}-1)\epsilon^{d-1}/n.$$  Thus applying Lemma \ref{beta} (Theorem A.1.12 in \cite{joel}) to the sum of $t$ random variables, each with success probability $2(2^{d}-1)\epsilon^{d-1}/n$, by setting 
 $$\beta=1+\frac{\gamma n}{2(2^d-1)\epsilon^{d-1}t},$$ is enough to tell us that $q$, the total number of important option sets, is more than $\gamma$ above the expectation with probability at most $$e^{-\gamma^{2}n/2(2^d-1)\epsilon^{d-1}t}.$$
 
 Similarly, we may consider the sum of $t$ random variables, each with success probability $2d\epsilon^{d-1}/n$, and apply the following lemma (Theorem A.1.13 in \cite{joel}).
 \begin{lemma}
 For $X$ as above, $Pr[X<-a]<e^{-a^{2}/2pt}$.
 \end{lemma}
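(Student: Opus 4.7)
The plan is to recognize this as a one-sided Chernoff bound for a sum of independent centered Bernoulli random variables, and to derive it via the exponential moment (Cramér-Chernoff) method. Writing $X = \sum_{i=1}^{t} Y_i$ with $Y_i = X_i - p$, $X_i \sim \mathrm{Bernoulli}(p)$, and $p = 2d\epsilon^{d-1}/n$ in the application, we have $E[Y_i] = 0$ and $|Y_i| \leq 1$. The target exponent $a^2/(2pt)$ matches the variance $\mathrm{Var}(X) = p(1-p)t \leq pt$, which signals that a quadratic-in-$\lambda$ bound on the log-MGF is exactly the right tool; Hoeffding's cruder $\lambda^2/8$ bound would only yield the weaker $e^{-2a^2/t}$.

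First I would apply Markov's inequality to the nonnegative random variable $e^{-\lambda X}$ for any $\lambda > 0$:
\[
Pr[X < -a] = Pr[e^{-\lambda X} > e^{\lambda a}] \leq e^{-\lambda a}\, E[e^{-\lambda X}] = e^{-\lambda a}\prod_{i=1}^{t} E[e^{-\lambda Y_i}],
\]
where the product form uses independence of the $Y_i$.

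Next I would bound each factor by a direct computation. Since $Y_i$ equals $1-p$ with probability $p$ and $-p$ with probability $1-p$,
\[
E[e^{-\lambda Y_i}] = e^{\lambda p}\bigl[1 + p(e^{-\lambda}-1)\bigr] \leq e^{\lambda p}\cdot e^{p(e^{-\lambda}-1)} = e^{p(e^{-\lambda}-1+\lambda)},
\]
using $1+u \leq e^u$. The elementary inequality $e^{-\lambda}-1+\lambda \leq \lambda^2/2$ for $\lambda \geq 0$ (easily verified by noting that the difference vanishes at $0$ and has nonnegative derivative) then yields $E[e^{-\lambda Y_i}] \leq e^{p\lambda^2/2}$, and hence $E[e^{-\lambda X}] \leq e^{pt\lambda^2/2}$.

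Finally I would optimize: collecting the above gives $Pr[X < -a] \leq \exp(-\lambda a + pt\lambda^2/2)$, and minimizing over $\lambda > 0$ by setting $\lambda = a/(pt)$ produces exactly $e^{-a^2/(2pt)}$, as claimed. There is no real obstacle here---this is essentially the textbook Cramér--Chernoff argument and is in fact the reason the author can simply invoke it as Theorem A.1.13 of \cite{joel}. The only step worth being careful about is the quadratic MGF bound, which must come from the exact Bernoulli MGF rather than Hoeffding's lemma in order to get the variance-aware factor $pt$ (rather than $t/4$) in the denominator of the exponent.
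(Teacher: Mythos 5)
Your derivation is correct, and it is essentially the proof one finds for Theorem A.1.13 in Alon and Spencer, which is all the paper does here---the lemma is stated without proof and simply cited. The Markov/exponential-moment step, the factorization $E[e^{-\lambda Y_i}] = e^{\lambda p}\bigl[1 + p(e^{-\lambda}-1)\bigr] \le e^{p(e^{-\lambda}-1+\lambda)} \le e^{p\lambda^2/2}$, and the optimization $\lambda = a/(pt)$ are exactly the standard route, and you correctly flag the key point: one must use the exact Bernoulli MGF (giving the variance-aware $pt$ in the denominator) rather than Hoeffding's lemma (which would only give $t/4$, useless here since $p = O(1/n)$). The only cosmetic difference from the cited statement is that you prove $\le$ where the reference asserts strict $<$; strictness follows at once since $1+u < e^u$ whenever $u = p(e^{-\lambda}-1) \neq 0$.
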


 Therefore $q$ is less than $\gamma$ below the expectation with probability at most $$e^{-\gamma^{2}n/2d\epsilon^{d-1}}.$$  We may bound the overall probability that $q$ is more than $\gamma$ off from its expected value by $$e^{-\gamma^{2}n/2(2^d-1)\epsilon^{d-1}t}+e^{-\gamma^{2}n/2d\epsilon^{d-1}}\leq 2e^{-(\gamma^2 n)/(2^{d+3}\epsilon^{d-1}t)}.$$

 The probability that the important sets are non-intersecting is, as computed in the proof of Theorem \ref{paradox}, at least $1-(4d^{2}/n)e^{td^{2}/n}$.  Given that the important sets are non-intersecting, the probability of the important set sequence making $A$ and $B$ have equal loads is at most 
 $$ \frac{\binom{q}{q/2}}{2^{q}}\sim \frac{\sqrt {2}}{\sqrt{\pi}q}
 \leq \frac{\sqrt{2}}{\sqrt{\pi((2d\epsilon^{d-1}t/n)-\gamma)}}.$$
   So overall we have probability $$\scriptstyle{(1-e^{-2\alpha^2/m'})(1-e^{-2\beta^{2}/m'})
(1-e^{-2\lambda^{2}/t})(1-2e^{-(\gamma^2 n)/(2^{d+3}\epsilon^{d-1}t)})
(1-(4d^{2}/n)e^{td^{2}/n})(1-\sqrt{2/\pi((2d\epsilon^{d-1}t/n)-\gamma)})}$$ of the two bins having the same load, as desired.
 \qquad\end{proof}

In fact, we can generalize the above proof to show that arbitrary gaps of constant size are unlikely:
\begin{theorem}\label{gaps}
  For any $\delta, j, m,n$, and $f(x,d,n,m)$ the bound on expected value given in Theorem \ref{alpha2}, if  $\alpha, \beta, \epsilon,\lambda,m',t$ are such that 
 \begin{itemize}
\item $\frac{(\delta^{d}m'-\alpha)\delta}{(6-4\delta)n}-\frac{2^{d}\epsilon^{d}m'+\beta}{\epsilon n}>\epsilon dt+\lambda$
 \item $m'+t=m$
 \end{itemize}
 then under $\mathrm{GREEDY}(m,n,d)$, for any $\gamma$, any pair of bins outside a set of size $\delta n$ are at least $j$ balls apart from each other with probability at least $$\scriptstyle{(1-e^{-2\alpha^2/m'})(1-e^{-2\beta^{2}/m'})
(1-e^{-2\lambda^{2}/t})(1-2e^{-(\gamma^2 n)/(2^{d+3}\epsilon^{d-1}t)})
(1-(4d^{2}/n)e^{td^{2}/n})(1-j\sqrt{2/\pi((2d\epsilon^{d-1}t/n)-\gamma)}).}$$
 \end{theorem}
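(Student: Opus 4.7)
The plan is to mimic the proof of Theorem \ref{equal} almost verbatim, modifying only the final Sperner-style counting step. As before, I would split the $m$ balls into a first block of $m'$ followed by a last block of $t$, fix an arbitrary pair of bins $A, B$ lying outside the bottom $\delta n$ at time $m'$, and let $W_\epsilon$ denote the smallest $\epsilon n$ bins at time $m'$. The three Chernoff estimates from Theorem \ref{equal}---bounding from below the load of the $\delta n$-th smallest bin, bounding from above the load of any bin in $W_\epsilon$, and bounding the total number of times any bin in $W_\epsilon$ is offered during the last $t$ steps---carry over unchanged and, under the gap hypothesis $\frac{(\delta^{d}m'-\alpha)\delta}{(6-4\delta)n} - \frac{2^{d}\epsilon^{d}m'+\beta}{\epsilon n} > \epsilon d t + \lambda$, guarantee with probability at least $(1-e^{-2\alpha^2/m'})(1-e^{-2\beta^2/m'})(1-e^{-2\lambda^2/t})$ that $A$ and $B$ remain strictly heavier than every bin in $W_\epsilon$ throughout the last $t$ rounds.

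I would next import verbatim the notion of an \emph{important} option set---one contained in $W_\epsilon \cup \{A\}$ or $W_\epsilon \cup \{B\}$---together with the two-sided Chernoff tail that contributes the factor $(1 - 2e^{-\gamma^2 n/(2^{d+3}\epsilon^{d-1} t)})$ on the total count $q$ of important sets, and Lemma \ref{paradox} contributing $(1 - (4d^2/n)e^{t d^2/n})$ for pairwise non-intersection of the important sets. Conditional on all these events, every important set deposits its ball into $A$ or $B$, the coupling of $<_{AB}$ with the final-load ordering $<_\ell$ holds exactly as in Theorem \ref{equal}, and the final gap $a(m) - b(m) = (a(m') - b(m')) + (2r - q)$ is an affine function of the single statistic $r$, namely the number of important sets that contain $A$.

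The genuinely new step is counting the ``bad'' sequences $\mathbf{v} \in \{A,B\}^q$ with $|a(m) - b(m)| < j$. Because $a(m) - b(m)$ advances in steps of $2$ as $r$ increments, the bad event corresponds to at most $j$ distinct integer values of $r$; each level set $\{\mathbf{v} : r(\mathbf{v}) = k\}$ is an anti-chain of the cube $\{A,B\}^q$ under $<_{AB}$, so by the Erd\H{o}s extension of Sperner's lemma (equivalently, by the symmetric chain decomposition) the total number of bad sequences is at most the sum of the $j$ largest binomial coefficients, which is bounded by $j\binom{q}{\lfloor q/2\rfloor}$. Dividing by the $2^q$ equally likely important-set realizations and invoking the Stirling-type estimate $\binom{q}{\lfloor q/2\rfloor}/2^q \sim \sqrt{2/(\pi q)}$ with the lower-tail bound $q \geq (2d\epsilon^{d-1} t/n) - \gamma$ from the Chernoff deviation on $q$ yields the advertised final factor $1 - j\sqrt{2/(\pi((2d\epsilon^{d-1} t/n) - \gamma))}$.

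The main subtlety I foresee is in the third paragraph: making the parity argument rigorous so that the bad event truly reduces to $j$ anti-chains (rather than the $2j - 1$ that a naive union bound over values of $a(m) - b(m)$ would yield) is what allows the constant in the statement to be $j$ rather than $2j - 1$. Everything else is a line-by-line rerun of Theorem \ref{equal} with no new coupling or concentration machinery needed; the result is exactly Theorem \ref{equal} with the final Sperner factor replaced by its natural generalization for a window of width $j$.
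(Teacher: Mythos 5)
Your proposal matches the paper's proof: the paper reruns the machinery of Theorem~\ref{equal} verbatim and replaces only the final Sperner step by a union over at most $j$ achievable differences, each of which is a single level set of $\{A,B\}^q$ (hence an antichain), giving the factor $j\binom{q}{\lfloor q/2\rfloor}/2^q$. Your explicit parity observation --- that $a(m)-b(m)$ moves in steps of $2$ as $r$ increments, so only $j$ rather than $2j-1$ level sets can fall in the window $|a(m)-b(m)|<j$ --- is a correct and slightly more careful reading of what the paper compresses into ``summing over $i$ from $0$ to $j-1$.''
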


 \begin{proof}
 The same argument holds.  The first four terms guarantee that the important option sets are indeed important.  The fifth term, as before, guarantees that these sets do not intersect, so that they generate a partial ordering on $\{A,B\}^{q}$.  Now we want to find the vectors which generate $A$ and $B$ within $j$ of each other.  Note that for any difference $i$, the vectors that produce a difference $i$ between $A$ and $B$ form an antichain.  So the probability of any fixed difference is, as before, bounded by $\binom{q}{q/2}/2^{q}$.  Summing over $i$ from 0 to $j-1$ gives the sixth term in the product, and the theorem is proved.
\qquad \end{proof}
 
 By setting appropriate values for all variables, we can develop several corollaries of these theorems. We first give specific values that guarantee most bin pairs are not equally loaded.  We then use these values to give a bound on the possible number of bins of any one load.

  \begin{corollary}\label{numbers}
 For $m$ and $n$  such that $n^{2}\log n =o(m)$, for any constant $\delta$, any pair of bins outside of a set of size $\delta n$ are equal with probability $O(1/\sqrt{\log n})$.
 \end{corollary}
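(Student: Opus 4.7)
The plan is to invoke Theorem~\ref{equal} with specific choices of the seven free parameters $\alpha,\beta,\gamma,\lambda,\epsilon,m',t$ calibrated to the hypothesis $n^{2}\log n=o(m)$, so that the six-fold product in that theorem's conclusion becomes $1-O(1/\sqrt{\log n})$. Only the final Sperner-type factor $1-\sqrt{2/(\pi(2d\epsilon^{d-1}t/n-\gamma))}$ should contribute an $O(1/\sqrt{\log n})$ loss; the other five factors will be forced to be $1-n^{-\Omega(1)}$ and so negligible on the $1/\sqrt{\log n}$ scale.

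First I would fix $\epsilon$ to be a small constant depending only on $d$ and $\delta$, chosen small enough that $2^{d}\epsilon^{d-1}<\delta^{d+1}/(6-4\delta)$. This ensures the two ``linear in $m'/n$'' terms on the LHS of Theorem~\ref{equal}'s inequality constraint do not cancel. Next, set $t=cn\log n$ for a small constant $c<1/d^{2}$: this makes $2d\epsilon^{d-1}t/n=\Theta(\log n)$, so the Sperner factor is precisely $1-\Theta(1/\sqrt{\log n})$, and simultaneously keeps the Lemma~\ref{paradox} factor $(4d^{2}/n)e^{td^{2}/n}=O(n^{cd^{2}-1})$ polynomially small in $n$. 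Then put $m'=m-t$, noting $m'\sim m$ under the hypothesis. Finally, take $\gamma$ equal to roughly half of $2d\epsilon^{d-1}t/n$, and take $\alpha=\beta=\sqrt{m\log n}$, $\lambda=\sqrt{t\log n}$; with these choices, each of $e^{-2\alpha^{2}/m'}$, $e^{-2\beta^{2}/m'}$, $e^{-2\lambda^{2}/t}$, and $2e^{-\gamma^{2}n/(2^{d+3}\epsilon^{d-1}t)}$ is of the form $\exp(-\Theta(\log n))=n^{-\Omega(1)}$, as needed.

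The main obstacle will be checking the quantitative inequality in the hypothesis of Theorem~\ref{equal},
$$\frac{(\delta^{d}m'-\alpha)\delta}{(6-4\delta)n}-\frac{2^{d}\epsilon^{d}m'+\beta}{\epsilon n}>\epsilon d t+\lambda.$$
With the parameters above, the LHS is $\Theta(m/n)$ (the leading constant is positive precisely by the choice of $\epsilon$, with $\alpha,\beta\ll m$ absorbed into lower order), while the RHS is $\Theta(n\log n)$ (since $\lambda\ll t$). So the inequality reduces to $m/n=\omega(n\log n)$, which is exactly the hypothesis $n^{2}\log n=o(m)$. The remaining work is routine bookkeeping to confirm that each error parameter is genuinely small compared with its associated leading term in this asymptotic regime. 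Once this is done, Theorem~\ref{equal} directly yields the stated $O(1/\sqrt{\log n})$ bound on the probability of equality for any pair of bins outside the exceptional set of size $\delta n$.
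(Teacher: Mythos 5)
Your proposal is correct and follows essentially the same route as the paper's own proof. The paper likewise invokes Theorem~\ref{equal} with $t=(1/(2d^{2}))n\log n$ (your $c=1/(2d^{2})<1/d^{2}$), $m'=m-t$, $\alpha=\beta=\sqrt{m'\log n}$, $\lambda=\sqrt{t\log n}$, and $\epsilon=\delta^{(d+1)/(d-1)}/32$ (which forces $2^{d}\epsilon^{d-1}<\delta^{d+1}/(6-4\delta)$, exactly your design criterion), then verifies the linear inequality reduces to $m/n=\omega(n\log n)$ under the hypothesis and reads off the $O(1/\sqrt{\log n})$ bound. The only cosmetic difference is that the paper takes $\gamma=(\log n)^{3/4}$, making the fourth factor $1-e^{-\Theta(\sqrt{\log n})}$, while you take $\gamma=\Theta(\log n)$, making it $1-n^{-\Omega(1)}$; both are negligible against the Sperner term $1-\Theta(1/\sqrt{\log n})$, so the conclusions coincide.
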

  \begin{proof}
We first verify that Theorem \ref{equal} may be applied. Let $t=(1/2d^{2})n\log n$ and $m'=m-t$.  Let $\alpha=\beta=\sqrt{m'\log n}$, $\lambda=\sqrt{t \log n}$.  Let $\gamma=(\log n)^{3/4}$.  Let $\epsilon=\delta^{(d+1)/(d-1)}/32$.  Then $$\frac{(\delta^{d}m'-\alpha)\delta}{(6-4\delta)n}-\frac{2^{d}\epsilon^{d}m'+\beta}{\epsilon n} \sim \left(\frac{\delta^{d+1}}{6-4\delta}-2^d\epsilon^{d-1}\right)\frac{m'}{n}.$$  Notice that $6-4\delta\leq 6$ and, since $d\geq 2$, $2^{d-5d+5}\leq 1/8$ to see that this is lower bounded by $$\delta^{d+1}(1/24)m'/n.$$  On the other hand, $$\epsilon d t +\lambda \sim (\delta^{(d+1)/(d-1)}/32)n \log n.$$  Because $n^{2} \log n = o(m)$, this is enough to show that the first condition of Theorem \ref{equal} is satisfied.  The second condition is satisfied by definition.  All that remains is a simple computation of the probability bound of Theorem \ref{equal} with the given values, which indeed shows an error probability of $O(1/\sqrt{\log n})$.
  \qquad \end{proof}
 
 \begin{corollary}
Outside of a set of size $\delta n$ for any constant $\delta$, with high probability, for $m$ such that $n^{2}\log n=o(m)$,  the greatest number of bins with the same load is less than $y$, for any $y$ such that $\frac{n}{(\log n)^{1/4}}=o(y)$.  For example, with high probability there are no more than $\frac{n}{(\log n)^{1/5}}$ bins with equal loads outside of a set of size $\delta n$.
 \end{corollary}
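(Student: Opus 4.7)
The plan is to combine Corollary~\ref{numbers} with a first-moment (Markov) argument that converts the $O(1/\sqrt{\log n})$ pairwise-equal-load probability into a high-probability bound on the largest number of bins sharing a common load.

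First, I would define $X$ to be the number of unordered pairs $\{B_i,B_j\}$ that both lie outside the exceptional set of size $\delta n$ from Corollary~\ref{numbers} and satisfy $b_i(m)=b_j(m)$. Corollary~\ref{numbers} bounds the probability of each such event by $O(1/\sqrt{\log n})$, so by linearity of expectation applied to the at most $\binom{n}{2}$ pairs I obtain $E(X) = O(n^2/\sqrt{\log n})$.

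Next I would relate the maximum multiplicity $M$ (the size of the largest common-load class among non-exceptional bins) to $X$: if $M \ge y$, then some $y$ non-exceptional bins share a common load, contributing $\binom{y}{2} \ge y^2/3$ pairs to $X$ (for $y \ge 3$, which is automatic since $y = \omega(n/(\log n)^{1/4})$). Thus the event $\{M \ge y\}$ is contained in $\{X \ge y^2/3\}$, and Markov's inequality gives
\[
\Pr(M \ge y) \le \Pr\!\left(X \ge y^2/3\right) \le \frac{3\,E(X)}{y^2} = O\!\left(\frac{n^2}{y^2 \sqrt{\log n}}\right).
\]
I then check this is $o(1)$ exactly in the stated regime: when $y = \omega(n/(\log n)^{1/4})$, one has $(n/y)^2 = o(\sqrt{\log n})$, so the Markov bound tends to zero. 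For the explicit example $y = n/(\log n)^{1/5}$, the ratio $(n/y)^2/\sqrt{\log n} = (\log n)^{2/5-1/2} = (\log n)^{-1/10} = o(1)$.

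The main bookkeeping concern is that the exceptional set of size $\delta n$ in Corollary~\ref{numbers} is random and depends on the execution of $\mathrm{GREEDY}$; however, since Corollary~\ref{numbers} already states the pairwise bound as a probability taken over the full algorithmic randomness (already incorporating the exceptional set into the event), linearity of expectation applies cleanly without any additional union bound over possible exceptional sets. Beyond this, the argument is a routine first-moment calculation, and I do not anticipate any serious obstacle.
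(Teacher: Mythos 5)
Your proposal is correct and takes essentially the same approach as the paper: both define a count of equal-load pairs among non-exceptional bins, use Corollary~\ref{numbers} and linearity of expectation to bound its mean by $O(n^2/\sqrt{\log n})$, observe that a load class of size $y$ forces $\Theta(y^2)$ such pairs, and conclude via Markov's inequality. The only cosmetic difference is that the paper sums the per-bin counts $x_i$ (effectively counting ordered pairs) while you work directly with unordered pairs, which is an immaterial constant-factor change.
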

 \begin{proof}
 As in Theorems \ref{equal} and \ref{gaps} and Corollary \ref{numbers}, we exclude $S_{\delta}$, the $\delta n$ smallest bins at time $m'$.  Now consider the bins at time $m$.  For any $B_{i} \notin S_{\delta}$, let $x_{i}$ be the number of bins not in $S_{\delta}$ with the same load as $B_{i}$.  By the previous corollary, we know $$E(x_{i})=(n)(1-\delta)O(1/\sqrt{\log n})=O(n/\sqrt{\log n}).$$  So $$E\left(\sum_{B_{i} \notin S_{\delta}} x_{i}\right)= O(n^{2}/\sqrt{\log n}).$$   We can use Markov's Inequality to see that $$Pr\left( \sum_{B_{i} \notin S_{i}} x_{i}> a\right)=O(n^{2}/a\sqrt{\log n}).$$
 
 Now, suppose there were $y$ bins with the same load.  Then the sum would be at least $y(y-1)/2$ (as there would be at least $\binom{y}{2}$ pairs with the same load).  The probability of that happening is $$O(n^{2}/y^{2}\sqrt{\log n}).$$  If $y$ is such that $$\frac{n}{(\log n)^{1/4}}=o(y),$$ this probability goes to zero. 
\qquad\end{proof}
\section{Conclusion}

We now have a large body of knowledge about GREEDY's distribution.  We understand the behavior of small bins when $m$ is bounded in terms of $n$, subsets of small bins for arbitrary $m$, and all bins when $m$ is large.  In fact, our understanding is strongest when $m$ is much larger than $n$; we have shown that the bin loads' relative positions will stabilize and gaps between them will increase.  

It seems clear that the algorithm's behavior will become more predictable as the number of balls increases.  This is entirely the opposite of FAIR, which behaves more like UNIFORM as the number of balls increases. The author would be interested to see more research in the $m$ greater than $n$ case, perhaps combining Theorems \ref{equal} and \ref{gambler} to generate a new theorem similar to Theorem \ref{alpha1} that holds bin-by-bin rather than setwise.

As mentioned in the introduction, GREEDY-type algorithms naturally arise in several settings.  Another line of research of interest is in applying these theoretical results to specific instances; for example, modifying GREEDY to model consumer behavior.  The author has made some preliminary investigations in this direction, which are promising.

\section*{Acknowledgments}
Thanks to Peter Shor for academic and financial support during both the research and writing phases of this project.  Thanks to Sachin Lodha and the team at Tata Research Development and Design Centre for introducing the author to this problem and the first steps in its analysis.  Thanks to Joel Spencer for many helpful conversations.  Thanks to the National Science Foundation, who in part supported this work under Award No. 1004382

\bibliographystyle{siam.bst}
\bibliography{UnbalAl.bib}

\end{document}